\newtheorem{theorem}{Theorem}
\newtheorem{lemma}[theorem]{Lemma}
\newtheorem{definition}[theorem]{Definition}
\newcommand{\R}{\mathbb{R}}                     
\newcommand{\F}{\mathbb{F}}                     
\newcommand{\subsets}[2]{\binom{#1}{#2}} 
\newcommand{\E}[1]{\mathbf{E}\left[#1\right]}
\newcommand{\pr}[1]{\mathrm{Pr}\left[#1\right]}
\newcommand{\setof}[2]{\{#1~:~#2\}}
\newcommand{\set}[1]{\{#1\}}
\newcommand{\oneone}{1\!\!1}
\newcommand{\one}[1]{\oneone\left(#1\right)}
\newcommand{\piecewise}[1]{\left\{\begin{array}{@{\;}ll}#1\end{array}\right.}
\newcommand{\tif}{\textrm{If }}
\newcommand{\ow}{\textrm{Otherwise}}
\newcommand{\svk}{\subsets{V}{k}}
\newcommand{\svkmo}{\subsets{V}{k-1}}
\newcommand{\bnkmo}{\binom{n}{k-1}} 
\newcommand{\bnk}{\binom{n}{k}}
\newcommand{\bnkmt}{\binom{n}{k-2}} 
\newcommand{\sm}{\setminus} 
\newcommand{\calC}{\mathcal{C}}
\newcommand{\ceil}[1]{\lceil #1 \rceil}
\newcommand{\bigO}[1]{O\left(#1\right)}
\newcommand{\Obj}{Obj} 
\newcommand{\footcomment}[1]{} 
\newcommand{\margincomment}[1]{} 
\newcommand{\commentw}[1]{\margincomment{WS: #1}}
\begin{document}
\title{Linear Time Approximation Schemes for the Gale-Berlekamp Game and Related Minimization Problems}
\author{Marek Karpinski\footnote{ marek@cs.uni-bonn.de. Dept. of Computer Science, University of Bonn. Part of this work was done while visiting Microsoft Research.}{ } and Warren Schudy\footnote{ws@cs.brown.edu. Dept. of Computer Science, Brown University. Part of this work was done while visiting University of Bonn.}}
\date{}
\maketitle

\begin{abstract}
We design a linear time approximation scheme for the Gale-Berlekamp Switching Game and generalize it to a wider class of dense fragile minimization problems including the Nearest Codeword Problem (NCP) and Unique Games Problem. Further applications include, among other things, finding a constrained form of matrix rigidity and maximum likelihood decoding of an error correcting code. As another application of our method we give the first linear time approximation schemes for correlation clustering with a fixed number of clusters and its hierarchical generalization. Our results depend on a new technique for dealing with small objective function values of optimization problems and could be of independent interest.
\end{abstract}

\thispagestyle{empty} 
\setcounter{page}{0} 

\clearpage

\section{Introduction}\label{sec:intro}

The Gale-Berlekamp Switching Game (GB Game) was introduced independently by Elwyn Berlekamp \cite{CS04, Spe94} and David Gale~\cite{Spe94} in the context of coding theory. This game is played using of a $m$ by $m$ grid of lightbulbs. The adversary chooses an arbitrary subset of the lightbulbs to be initially ``on.'' Next to every row (resp. column) of lightbulbs is a switch, which can be used to invert the state of every lightbulb in that row (resp. column). The protagonist's task is to minimize the number of lit lightbulbs (by flipping switches). This problem was proven very recently to be NP-hard~\cite{RV08}. Let $\Phi = \set{-1,1} \subset \R$. For matrices $M,N$ let $d(M,N)$ denote the number of entries where $M$ and $N$ differ. It is fairly easy to see that the GB Game is equivalent to the following natural problems:~\cite{RV08}
\begin{itemize}
\item Given matrix $M \in \Phi^{m \times m}$ find row vectors $x, y \in \Phi^m$ minimizing $d(M, x y^T)$.
\item Given matrix $M \in \Phi^{m \times m}$ find rank-1 matrix $N \in \Phi^{m \times m}$ minimizing $d(M, N)$.
\item Given matrix $M \in \F_2^{m \times m}$ find $x, y \in \F_2^m$ minimizing $\sum_{ij} \one{M_{ij} \neq x_i \oplus y_j}$ where $\F_2$ is the finite field over two elements with addition operator $\oplus$.
\item Given matrix $M \in \Phi^{m \times m}$ find row vectors $x, y \in \Phi^m$ \emph{maximizing} $x^T M y$.
\end{itemize}

We focus on the equivalent minimization versions and prove existence of linear-time approximation schemes for them.

\begin{theorem}\label{thm:GB}
For every $\epsilon>0$ there is a randomized $1+\epsilon$-approximation algorithm for the Gale-Berlekamp Switching Game (its minimization version) with runtime $O(m^2) + 2^{O(1/\epsilon^2)}$.
\end{theorem}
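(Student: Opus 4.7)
My plan is to work with the equivalent maximization form $\max_{x,y \in \Phi^m} x^T M y$ and exploit $\mathrm{OPT}_{\min} = (m^2 - \mathrm{OPT}_{\max})/2$. The central tension is that while $\mathrm{OPT}_{\max}$ is always $\bigOmega{m^{3/2}}$ (a random $y$ already gives $\E{\sum_i |(My)_i|} = \Theta(m^{3/2})$), $\mathrm{OPT}_{\min}$ can be as small as $0$, so a purely additive $\epsilon m^2$ bound cannot on its own yield a multiplicative $(1+\epsilon)$ guarantee on the minimum.

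First I would run a standard sample-and-enumerate subroutine. Draw a uniform random column sample $T$ of size $t = \Theta(1/\epsilon^2)$ and enumerate all $2^t$ assignments $y_T \in \Phi^{|T|}$. For each guess, form the estimator $\tilde v_i = (m/t)\sum_{j \in T} M_{ij} y_j$, set $x_i = \mathrm{sign}(\tilde v_i)$, complete $y_j$ for $j \notin T$ by $\mathrm{sign}(\sum_i x_i M_{ij})$, evaluate, and retain the best candidate. A Hoeffding-plus-union-bound argument shows that on the correct guess $y_T = y^*_T$, the deviation $|\tilde v_i - (My^*)_i|$ is $\bigO{m/\sqrt t} = \bigO{\epsilon m}$ uniformly in $i$ w.h.p., so $x_i$ is suboptimal only on rows where $|(My^*)_i| = \bigO{\epsilon m}$, contributing at most an additive $\bigO{\epsilon m^2}$ loss in the maximization, and equivalently in the minimization. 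By precomputing column-restricted row sums and enumerating $y_T$ in Gray-code order, the whole sweep runs in $\bigO{m^2} + 2^{\bigO{1/\epsilon^2}}$ time.

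For the small-$\mathrm{OPT}$ regime I would invoke a second subroutine. The key observation is that when $\mathrm{OPT} = \sum_i d_i$ is small, most rows of $M$ agree with $x^*_i y^*$ on almost every coordinate, so for a uniformly random row $i$, the vector $\pm M_{i,\cdot}$ is Hamming-close to $y^*$. I would sample $\bigO{1/\epsilon}$ rows, enumerate the unknown sign of $x^*_i$ for each, take the candidate $y^{\circ} = x_i M_{i,\cdot}$, complete $x$ greedily against $y^\circ$, and refine by brute-forcing the coordinates of $y^\circ$ whose ``confidence margin'' $|(x^T M)_j|$ is small. The overall algorithm runs both subroutines and returns whichever achieves the smaller minimization objective; the analysis splits on whether $\mathrm{OPT} \gtrless c \epsilon m^2$ for a suitable constant $c$. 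The main obstacle I foresee is the refinement inside the small-$\mathrm{OPT}$ subroutine: a naive analysis of the extracted $y^\circ$ gives only a constant-factor bound, because errors in $y^\circ$ and in the induced $x$ reinforce each other, and driving the ratio down to $1+\epsilon$ is what requires the ``new technique for dealing with small objective function values'' advertised in the abstract, namely a coupled and tight error accounting that still fits inside the $\bigO{m^2} + 2^{\bigO{1/\epsilon^2}}$ runtime budget.
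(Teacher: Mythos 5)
There is a genuine gap, and you have located it yourself: everything hinges on how the small-$\mathrm{OPT}$ regime is closed out, and the step you propose there does not work. Your first subroutine (constant-size column sample, enumerate $y_T$, greedy completion) is exactly the paper's additive-error component and correctly disposes of the case $\mathrm{OPT} \geq c\epsilon m^2$; the paper implements the case split by first running the additive algorithm and testing whether the value it returns exceeds a fixed threshold $\Theta(m^2)$, which is equivalent to your ``take the better of two subroutines.'' The problem is the refinement in your second subroutine. After you complete $x$ and look at the confidence margins $|(x^T M)_j|$, the number of low-margin coordinates is not constant: the paper's Lemma~\ref{lem:fewUnclear} shows it is $\Theta(\mathrm{OPT}/(\delta m))$, which for $\mathrm{OPT}$ anywhere between $\omega(\epsilon m)$ and the threshold $c\epsilon m^2$ ranges up to $\Theta(\epsilon m)$. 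Brute-forcing their assignments therefore costs up to $2^{\Theta(\epsilon m)}$, destroying the $\bigO{m^2}+2^{\bigO{1/\epsilon^2}}$ budget, and you give no alternative mechanism for setting them; the ``coupled and tight error accounting'' you defer to is precisely the missing content of the theorem.

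The paper's resolution is different in two ways. First, after the constant-sample greedy step $x^{(1)}$ it performs a \emph{second}, full greedy pass to get $x^{(2)}$ and computes margins with respect to $b(x^{(1)},\cdot,\cdot)$ over the entire input; since $x^{(1)}$ miscolors only a $\bigO{\delta}$ fraction of variables, these margins track the optimal margins $b(x^*,\cdot,\cdot)$ to within $\frac{\delta}{12}\bnkmo$, so every variable placed in the clearcut set $C$ is provably set as in $x^*$ (Lemma~\ref{lem:TGood}) --- a single margin test on the raw sample estimates would not give this with a constant-size sample. Second, and this is the key idea you are missing, the tricky variables $T = V\sm C$ are not enumerated: the algorithm freezes $C$, forms the residual CSP on $T$ (the $q_K$ construction), and runs the \emph{additive} approximation algorithm on it. Because $|T| = \bigO{\mathrm{OPT}/(\delta m)}$ and each residual constraint has weight $\bigO{|C|/|T|}$, the additive error of that call is $\bigO{\epsilon |T| m} = \bigO{\epsilon\,\mathrm{OPT}}$, which is exactly the multiplicative guarantee needed. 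So the advertised new technique is ``run the additive-error algorithm at the end, on the residual instance, where its additive error automatically scales with $\mathrm{OPT}$'' --- not a sharper direct analysis of the extracted row $y^{\circ}$, which, as you suspect, stalls at a constant factor.
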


In order to achieve the linear-time bound of our algorithms, we introduce two new techniques: calling the additive error approximation algorithm \emph{at the end} of our algorithm and greedily refining the random sample used by the algorithm. These new methods could also be of independent interest.

A constraint satisfaction problem (CSP) consists of $n$ variables over a domain of constant-size $d$ and a collection of arity-$k$ constraints ($k$ constant). The objective of MIN-$k$CSP (MAX-$k$CSP) is to minimize the number of unsatisfied (maximize the number of satisfied) constraints. An (everywhere) dense instance is one where every variable is involved in at least a constant times the maximum possible number of constraints, i.e. $\Omega(n^{k-1})$. For example, the GB Game is a dense MIN-2CSP since each of the $n=2m$ variables is involved in precisely $m=n/2$ constraints. It is natural to consider generalizing Theorem~\ref{thm:GB} to all dense MIN-CSPs, but unfortunately many such problems have no PTASs unless P=NP~\cite{BFK03} so we must look at a restricted class of MIN-CSPs. A constraint is \emph{fragile} if modifying any variable in a satisfied constraint makes the constraint unsatisfied. A CSP is \emph{fragile} if all of its constraints are. Clearly the GB Game can be modeled as a fragile dense MIN-2CSP. Our results generalize to all dense \emph{fragile} MIN-$k$CSPs.

We now formulate our general theorem.

\begin{theorem}\label{thm:minCSP}\label{thm:simple}
For every $\epsilon>0$ there is a randomized $1+\epsilon$-approximation algorithm for dense fragile MIN-kCSPs with runtime $O(n^k) + 2^{O(1/\epsilon^2)}$.
\end{theorem}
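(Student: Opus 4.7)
The plan is to combine the additive-error PTAS framework for dense CSPs (in the Arora--Frieze--Karpinski style) with a local fragility-driven correction, so that the additive error of the first stage becomes multiplicative error in the second, and to use the two runtime tricks announced in the introduction to keep the total complexity additive as claimed.

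\emph{Sampling stage.} Following the familiar template, pick a uniformly random variable subset $S$ of constant size $O(1/\epsilon^2)$; for each of the $d^{|S|} = 2^{O(1/\epsilon^2)}$ partial assignments $\sigma_S$, extend to a full assignment by choosing each $v \notin S$ to minimize the number of constraints on $v$ together with $k-1$ elements of $S$ that $\sigma_S$ violates. A Hoeffding bound on the per-variable losses, combined with the density bound of $\Omega(n^{k-1})$ constraints per variable, gives a candidate $\sigma$ with $\mathrm{cost}(\sigma) \leq \mathrm{OPT} + \epsilon' n^k$ for a suitable $\epsilon'$ derived from $\epsilon$.

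\emph{Fragility stage.} Writing $D = \{v : \sigma(v) \neq \sigma^*(v)\}$ for the symmetric difference of $\sigma$ with an optimum $\sigma^*$, observe that every constraint whose scope meets $D$ in exactly one variable differs from its $\sigma^*$-restriction by a single-variable flip, so fragility turns every such constraint that was satisfied in $\sigma^*$ into a violation in $\sigma$. Counting these witnesses against the at most $O(|D|^2 n^{k-2})$ constraints that meet $D$ at least twice yields $|D| = O(\mathrm{cost}(\sigma)/n^{k-1})$. When $\mathrm{OPT} = \Omega(n^k)$ the additive bound is already $(1+\epsilon)$-multiplicative, and otherwise the same inequality gives $|D| = O(\epsilon n)$. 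In this small-OPT regime a single pass of \emph{majority voting} --- replace each variable by the value minimizing the incident violated constraints given the others --- recovers $\sigma^*$ exactly, because the same fragility-plus-density count shows the optimal value beats every competitor by $\Omega(n^{k-1})$ votes for each variable.

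\emph{Runtime.} The naive realization pays $\Omega(n^k) \cdot 2^{O(1/\epsilon^2)}$ on evaluating all candidates, and this is what the two announced tricks cut. First, \emph{apply the additive algorithm at the end}: rank the $2^{O(1/\epsilon^2)}$ candidate extensions $\sigma_S$ using a small second sub-sample of constraints in time depending only on $\epsilon$, and spend the $O(n^k)$ budget only once, on the top-ranked candidate, to materialize its extension, evaluate its cost, and run the single majority-voting pass. Second, \emph{greedy refinement of $S$}: instead of inflating $|S|$ to $\Theta(\log n)$ for concentration, start with a constant-sized random $S$ and repeatedly swap a sample element with the non-sample element whose swap most reduces the estimated cost, lifting the best of the $2^{O(1/\epsilon^2)}$ examined $\sigma_S$'s up to the additive bound while keeping $|S|$ constant. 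The main obstacle I expect is showing that the sub-sample-based ranking never rejects a candidate whose true extension cost is within $\epsilon' n^k$ of the eventual winner by a margin large enough to disrupt the subsequent fragility recovery.
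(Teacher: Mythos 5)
Your sampling stage and the counting bound $|D| = O(\mathrm{cost}(\sigma)/n^{k-1})$ are in the spirit of the paper's first greedy step and its Lemma bounding the number of ``unclear'' variables, but the fragility stage contains a genuine gap that breaks the $(1+\epsilon)$ guarantee in the small-OPT regime. You claim that fragility plus density shows ``the optimal value beats every competitor by $\Omega(n^{k-1})$ votes for each variable,'' so that one majority-voting pass recovers $\sigma^*$ exactly. The fragile-dense property only lower-bounds the \emph{sum} $b(x^*,v,i)+b(x^*,v,j) \geq \delta\binom{n}{k-1}$, not the \emph{margin}: a variable can have $b(x^*,v,x^*_v)$ and $b(x^*,v,j)$ both equal to about $\tfrac{\delta}{2}\binom{n}{k-1}$, in which case majority voting has no reason to prefer the optimal value. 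The large-margin conclusion holds for every variable only when $\mathrm{OPT}=0$. When $\mathrm{OPT}=\gamma\binom{n}{k}$ with $\gamma$ small but positive, an averaging argument (the paper's Lemma~\ref{lem:fewUnclear}) bounds the number of such near-tied variables by $O(\gamma n/\delta)$ --- nonzero, and potentially $\Theta(\epsilon n)$ or more. Majority voting can misassign all of them, and the resulting cost overhead is a constant factor of $\mathrm{OPT}$, not a $(1+\epsilon)$ factor. (A sanity check: if one voting pass recovered the optimum exactly whenever $\mathrm{OPT}=o(n^k)$, the NP-hard Gale--Berlekamp game would be solved exactly on all such instances.)

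The missing idea is precisely the paper's main novelty, which you gesture at under ``apply the additive algorithm at the end'' but then use only as a cheap ranking device. After the two greedy passes, the paper partitions the variables into clear-cut ones $C$ (large margin in $b(x^{(1)},v,\cdot)$), which are provably set to their optimal values, and tricky ones $T=V\setminus C$ with $|T| = O(\gamma n/\delta)$. It then fixes $C$, rewrites the residual objective as a MIN-$k$CSP over $T$ alone (the $q_K$ construction of Lemma~\ref{lem:constrainedObj}), and runs the additive-error algorithm \emph{on that residual instance} with additive error proportional to $|T|\cdot n^{k-1} = O(\gamma n^k) = O(\mathrm{OPT})$; scaling by $\epsilon$ turns this into $\epsilon\cdot\mathrm{OPT}$, i.e.\ a multiplicative guarantee. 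Without this third stage your argument yields only an $O(1)$-approximation when $\mathrm{OPT}$ is small. Your runtime concerns (ranking candidates by a sub-sample, refining the sample greedily) are reasonable but secondary; the approximation argument is where the proposal fails.
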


Any approximation algorithm for MIN-$k$CSP must read (by adversary argument) the entire input to distinguish between instances with optimal value of 1 and 0 and hence the $O(n^k)$ term of the runtime cannot be improved. It is fairly easy to see that improving the second term (to $2^{o(1/\epsilon^2)}$) would imply a $O(n^2) + 2^{o(1/\epsilon^2)}$-time PTAS for average-dense max cut. Over a decade worth of algorithms~\cite{AFK96, AKK95, FK99, AFKK02, MS08} for MAX-kCSP all have dependence on $\epsilon$ of at best $2^{O(1/\epsilon^2)}$, so any improvement to the runtime of Theorem~\ref{thm:minCSP} would be surprising.

We begin exploring applications of Theorem~\ref{thm:minCSP} by generalizing the Gale-Berlekamp game to higher dimensions $k$ ($k$-ary GB) and then to arbitrary $k$-ary equations. Given $n$ variables $x_i \in \F_2$ and $m$ linear equations of the form $x_{i_1} \oplus x_{i_2} \oplus \ldots \oplus x_{i_k} = 0$ (or $=1$), the $k$-ary Nearest Codeword Problem (NCP) consists of finding an assignment minimizing the number of unsatisfied equations. As the name suggests, the Nearest Codeword Problem can be interpreted as maximum likelihood decoding for linear error correcting codes. The Nearest Codeword Problem has fragile constraints so Theorem~\ref{thm:minCSP} implies a linear-time PTAS for the $k$-ary GB problem and the dense $k$-ary Nearest Codeword Problem.

The Unique Games Problem (UGP)~\cite{FL92,GT06} consists of solving MIN-2CSPs where the constraints are permutations over a finite domain $D$ of colors; i.e. a constraint involving variables $x_u$ and $x_v$ is satisfied iff $x_u = \pi_{uv}(x_v)$ for permutation $\pi_{uv}$. These constraints are clearly fragile, so Theorem~\ref{thm:minCSP} implies also a linear-time PTAS for the dense Unique Game Problem (with a constant number of colors).

The multiway cut problem, also known as MIN-$d$CUT, consists of coloring an undirected graph with $d$ colors, such that each of $d$ terminal nodes $t_i$ is colored with color $i$, minimizing the number of bichromatic edges. The requirement that the terminal nodes must be colored particular colors does not fit in our dense fragile MIN-CSP framework, so we use a work-around: let the constraint corresponding to an edge be satisfied only if it is monochromatic \emph{and} the endpoint(s) that are terminals (if any) are colored correctly.

As another application, consider MIN-$k$SAT, the problem of minimizing the number of \emph{satisfied} clauses of a boolean expression in conjunctive normal form where each clause has $k$ variables (some negated). We consider the equivalent problem of minimizing the number of \emph{unsatisfied} conjunctions of a boolean expression in \emph{disjunctive} normal form. A conjunction can be represented as a fragile constraint indicating that all of the negated variables within that constraint are false and the remainder are true, so Theorem~\ref{thm:minCSP} applies to MIN-$k$SAT as well.

Finally we consider correlation clustering and hierarchical clustering with a fixed number of clusters \cite{GG06,AC05}. Correlation cluster consists of coloring an undirected graph with $d$ colors (like multiway cut so far), minimizing the sum of the number of cut edges and the number of uncut non-edges.  Correlation clustering with two clusters is equivalent to the following symmetric variant of the Gale-Berlekamp game: given a symmetric matrix $M \in \Phi^{m \times m}$ find a row vector $x \in \Phi^m$ minimizing $d(M, x x^T)$. Like the GB game, correlation clustering with 2 clusters is fragile and Theorem~\ref{thm:simple} gives a linear-time approximation scheme. For $d>2$ correlation clustering is not fragile but has properties allowing for a PTAS anyway. We also solve a generalization of correlation clustering called hierarchical clustering \cite{AC05}. We prove the following theorem.

\begin{theorem}\label{thm:cc}
For every $\epsilon>0$ there is a randomized $1+\epsilon$-approximation algorithm for correlation clustering and hierarchical clustering with fixed number of clusters $d$ with running time $n^2 2^{O(d^6/\epsilon^2)}$.
\end{theorem}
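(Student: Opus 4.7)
The plan is to reduce correlation and hierarchical clustering to the dense MIN-CSP framework of Theorem~\ref{thm:simple}, extending the techniques where fragility fails. For correlation clustering with $d=2$ colors, both the equality constraint (for edges) and the inequality constraint (for non-edges) are fragile: flipping the Boolean value of a single variable in a satisfied binary constraint violates it. Hence Theorem~\ref{thm:simple} applies directly and yields a $1+\epsilon$-approximation in time $O(n^2)+2^{O(1/\epsilon^2)}$, which is subsumed by the claimed bound. The remaining cases are the $d\geq 3$ flat problem and the hierarchical extension.

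For $d\geq 3$, the inequality constraint $x_i\neq x_j$ is no longer fragile (changing $x_i$ from color $1$ to $3$ while $x_j=2$ still satisfies it), so I cannot appeal to Theorem~\ref{thm:simple} as a black box. Instead I would mimic the two-phase structure of its proof. Phase one draws a random sample $S\subseteq V$ of size polynomial in $d$ and $1/\epsilon$, enumerates the $d^{|S|}$ candidate colorings of $S$, and for each candidate places every $v\in V\setminus S$ in the color minimizing its estimated cost against $S$ (scaled by $n/|S|$). A Hoeffding/Chernoff-type concentration argument in the style of Arora--Frieze--Kannan shows that the candidate agreeing with some optimal coloring on $S$ produces a global coloring of cost at most $\mathrm{OPT}+\epsilon n^2$. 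Phase two applies the ``additive error approximation at the end'' technique from the proof of Theorem~\ref{thm:simple}: a greedy local-refinement round on the phase-one output, followed by one call of the additive approximation, upgrades the additive guarantee to a multiplicative $1+\epsilon$ guarantee even when $\mathrm{OPT}$ is small, because the refined coloring is already structurally close to an optimum.

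For hierarchical clustering I would encode a hierarchy as a rooted labeled tree on $d$ leaves together with a vertex-to-leaf assignment. Enumerating the $d^{O(d)}$ possible topologies reduces the hierarchical problem to a family of flat subproblems with an enlarged but still constant-size label set (pairs of leaves together with the level of their least common ancestor in the tree). The same two-phase algorithm applies to each topology, and tracking how $d$ propagates through the sample size, the enlarged label set, the enumeration cost, and the concentration parameter yields the final running time $n^2\cdot 2^{O(d^6/\epsilon^2)}$, with the high exponent of $d$ being the product of several $\mathrm{poly}(d)$ factors rather than something intrinsic.

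The main obstacle is the second phase in the non-fragile regime $d\geq 3$: one must show that the phase-one output, although not a multiplicative approximation when $\mathrm{OPT}$ is small, already has enough local structure (essentially, almost all vertices placed in their correct cluster) that one invocation of the additive approximation converts the additive bound into a multiplicative one. Establishing this conversion uniformly in $d$, while carefully accounting for the $d$ dependencies in each step, is where the main technical content lies; the rest of the argument is standard concentration and bookkeeping.
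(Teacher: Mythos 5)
Your reduction for $d=2$ and your enumeration of tree topologies for the hierarchical case both match the paper. The genuine gap is in your handling of $d\geq 3$: you propose a \emph{single} round of sampling, greedy refinement, and clear-cut identification, followed by one call to the additive approximation, and you yourself flag that the needed claim is that the phase-one output has ``almost all vertices placed in their correct cluster.'' That claim is exactly what fails. The replacement for fragility that the paper uses is \emph{rigidity}: $b(x^*,v,x^*_v)+b(x^*,v,j)\geq \delta\,|\setof{u}{x^*_u=x^*_v}|$, i.e.\ the gap that makes a vertex ``clear-cut'' scales with the size of $v$'s cluster in the optimum, not with $n$. Consequently one round of the clear-cut test can only certify vertices lying in \emph{large} optimal clusters; vertices in small clusters remain ambiguous after the round, and the set of such vertices can be large even when $\mathrm{OPT}$ is tiny (e.g.\ a cost-zero instance with one huge cluster and several small ones). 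So the residual ``tricky'' set $T$ is not expensive relative to the additive error $\epsilon'|T|^2$, and your conversion from additive to multiplicative breaks down.

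The paper's fix, which your proposal is missing, is to \emph{recurse}: after fixing the clear-cut vertices, it re-runs the whole procedure (sampling, enumeration, greedy, clear-cut test) on the remaining tricky vertices $T'$, up to depth $|D|+1$. The progress measure is the number of ``finished'' optimal clusters (Lemma~\ref{lem:somebodyFinishes} shows each level finishes at least one new cluster, because among the surviving vertices some cluster is now large \emph{relative to $|T|$} and hence its vertices become clear-cut at this level). After at most $d$ levels either every cluster is finished or the residual subproblem costs $\Omega(\delta^3|T|^2/|D|^3)$, at which point the additive approximation suffices. This recursion (with branching factor $|D|^s$ per level) is also what produces the $2^{O(d^6/\epsilon^2)}$ factor in the running time; a single-round scheme would not naturally generate that dependence. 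A secondary point: because fixed vertices contribute linear cost terms to the residual CSP, the paper also needs a variant of the additive-error algorithm tolerant of such terms (Theorem~\ref{thm:MSVariant}); your proposal invokes the additive algorithm as a black box without addressing this.
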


The above results improves on the running time $O(n^{9^d / \epsilon^2})\log n = O(n^{9^d / \epsilon^2})$ of the previous PTAS for correlation clustering by Giotis and Guruswami \cite{GG06} in two ways: first the polynomial is linear in the size of the input and second the exponent is polynomial in $d$ rather than exponential. Our result for hierarchical clustering with a fixed number of clusters is the first PTAS for that problem.

\bigskip

We prove Theorem \ref{thm:simple} in Sections~\ref{sec:FD} and~\ref{sec:FDAnalysis} and Theorem~\ref{thm:cc} in Sections~\ref{sec:cc} and~\ref{sec:ccAnalysis}.

\subsection*{Related Work}

Elwyn Berlekamp built a physical model of the GB game with either $m=8$ or $m=10$~\cite{CS04, Spe94} at Bell Labs in the 1960s motivated by the connection with coding theory and the Nearest Codeword Problem. Several works~\cite{FS89, CS04} investigated the cost of worst-case instances of the GB Game; for example the worst-case instance for $m=10$ has cost 35~\cite{CS04}. Roth and Viswanathan~\cite{RV08} showed very recently that the GB game is in fact NP-hard. They also give a linear-time algorithm if the input is generated by adding random noise to a cost zero instance. Replacing $\Phi$ with $\R$ in the third formulation of the GB Game yields the problem of computing the 1-rigidity of a matrix. Lower bounds on matrix rigidity have applications to circuit and communication complexity~\cite{Lok95}.

The Nearest Codeword Problem is hard to approximate in general~\cite{ABSS93, DKRS03} better than $n^{\Omega(1/\log \log n)}$. It is hard even if each equation has exactly 3 variables and each variable appears in exactly 3 equations \cite{BK02b}. There is a $O(n / \log n)$ approximation algorithm \cite{BK02a, APY08}.

Over a decade ago two groups~\cite{AKK95, F96} independently discovered polynomial-time approximation algorithms for MAX-CUT achieving additive error of $\epsilon n^2$, implying a PTAS for average-dense MAX-CUT instances. The fastest algorithms~\cite{AFKK02,MS08} have constant runtime $2^{O(1/\epsilon^2)}$ for approximating the value of any MAX-$k$CSP over a binary domain $D$. This can be generalized to an arbitrary domain $D$. To see this, note that we can code $D$ in binary and correspondingly enlarge the arity of the constraints to $k \ceil{\log |D|}$. A random sample of $\tilde O(1/\epsilon^4)$ variables suffices to achieve an additive approximation~\cite{AFKK02, MS08, RV07}.  These results extend to MAX-BISECTION~\cite{FKK06}.

Arora, Karger and Karpinski~\cite{AKK95} introduced the first PTASs for dense \emph{minimum} constraint satisfaction problems. They give PTASs with runtime $n^{O(1/\epsilon^2)}$~\cite{AKK95} for min bisection and multiway cut (MIN-d-CUT). Bazgan, Fernandez de la Vega and Karpinski~\cite{BFK03} designed PTASs for MIN-SAT and the nearest codeword problem with runtime $n^{O(1/\epsilon^2)}$. Giotis and Guruswami \cite{GG06} give a PTAS for correlation clustering with $d$ clusters with runtime $O(n^{9^d / \epsilon^2})$. We give linear-time approximation schemes for all of the problems mentioned in this paragraph except for the MIN-BISECTION problem.

\section{Fragile-dense Algorithm}\label{sec:FD}
\subsection{Intuition}\label{sec:intuition}

Consider the following scenario. Suppose that our nemesis, who knows the optimal solution to the Gale-Berlekamp problem shown in Figure~\ref{fig:gb}, gives us a constant size random sample of it to tease us. How can we use this information to construct a good solution? One reasonable strategy is to set each variable greedily based on the random sample. Throughout this section we will focus on the row variables; the column variables are analogous. For simplicity our example has the optimal solution consisting of all of the switches in one position, which we denote by $\alpha$. For row $v$, the greedy strategy, resulting in assignment $x^{(1)}$, is to set switch $v$ to $\alpha$ iff $\hat b(v,\alpha) < \hat b(v,\beta)$, where $\hat b(v,\alpha)$ (resp. $\hat b(v,\beta)$) denotes the number of light bulbs in the intersection of row $v$ and the sampled columns that would be lit if we set the switch to position $\alpha$ (resp. $\beta$).

With a constant size sample we can expect to set most of the switches correctly but a constant fraction of them will elude us. Can we do better? Yes, we simply do greedy again. The greedy prices analogous to $\hat b$ are shown in the columns labeled with $b$ in the middle of Figure~\ref{fig:gb}.  For the example at hand, this strategy works wonderfully, resulting in us reconstructing the optimal solution exactly, as evidenced by the $b(x^{(1)},v,\alpha)<b(x^{(1)},v,\beta)$ for all $v$. In general this does not reconstruct the optimal solution but provably gives something close.

Some of the rows, e.g. the last one, have $b(x^{(1)},v,\alpha)$ much less than $b(x^{(1)},v,\beta)$ while other rows, such as the first, have $b(x^{(1)},v,\alpha)$ and $b(x^{(1)},v,\beta)$ closer together. We call variables with $|b(x^{(1)},v,\alpha) - b(x^{(1)},v,\beta)| > \Theta(n)$ \emph{clearcut}. Intuitively, one would expect the clearcut rows to be more likely correct than the nearly tied ones. In fact, we can show that we get all of the clearcut ones correct, so the remaining problem is to choose values for the rows that are close to tied. However, those rows have a lot of lightbulbs lit, suggesting that the optimal value is large, so it is reasonable to run an additive approximation algorithm and use that to set the remaining variables.

Finally observe that we can simulate the random sample given by the nemesis by simply taking a random sample of the variables and then doing exhaustive search of all possibly assignments of those variables. We have just sketched our algorithm.

Our techniques differ from previous work~\cite{BFK03, AKK95, GG06} in two key ways:
\begin{enumerate}
\item Previous work used a sample size of $O((\log n)/\epsilon^2)$, which allowed the clearcut variables to be set correctly after a single greedy step. We instead use a constant-sized sample and run a second greedy step before identifying the clearcut variables.
\item Our algorithm is the first one that runs the additive error algorithm after identifying clearcut variables. Previous work ran the additive error algorithm at the beginning.
\end{enumerate}

The same ideas apply to all dense fragile CSPs. In the remainder of the paper we do not explicitly discuss the GB Game but present our ideas in the abstract framework of fragile-dense CSPs.

\begin{figure}[tb]
\begin{center}
\includegraphics[width=\textwidth]{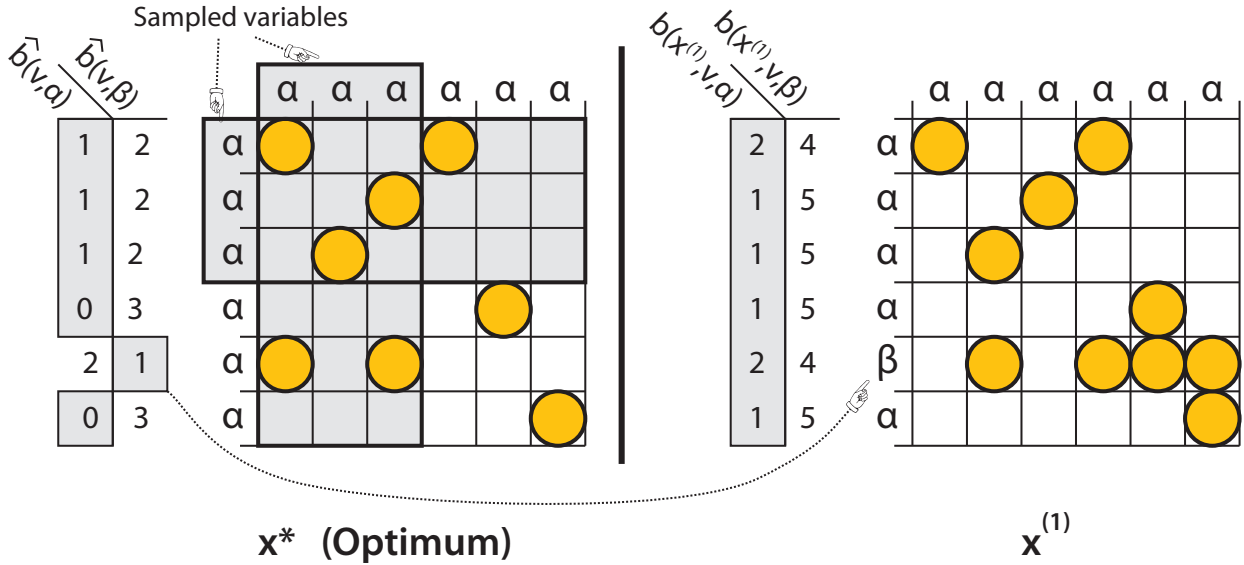}
\end{center}
\caption{An illustration of our algorithmic ideas on the Gale-Berlekamp Game.}
\label{fig:gb}
\end{figure}

\subsection{Model}
We now give a formulation of MIN-$k$CSP that is suitable for our purposes.
For non-negative integers $n, k$, let $\binom{n}{k}=\frac{n!}{k!(n-k)!}$,
 and for a given set $V$ let $\subsets{V}{k}$ denote the set of subsets of $V$ of size $k$ (analogous to $2^S$ for all subsets of $S$).
 There is a set $V$ of $n$ variables, each of which can take any value in constant-sized domain $D$.\commentw{For polytime, $|D|=2^{\sqrt{\log n}}$ is ok}{} Let $x_v \in D$ denote the value of variable $v$ in the assignment $x$.

Consider some $I \in \svk$. There may be many constraints over these variables; number them arbitrarily. Define $p(I, \ell, x)$ to be 1 if the $\ell$th constraint over $I$ is unsatisfied in assignment $x$ and zero otherwise. For $I \in \subsets{V}{k}$, we define $p_I(x)= \frac{1}{\eta} \sum_\ell p(I, \ell, x)$, where $\eta$ is a scaling factor to ensure $0 \leq p_I(x) \leq 1$ (e.g. $\eta=2^k$ for MIN-$k$SAT). For notational simplicity we write $p_I$ as a function of a complete assignment, but $p_I(x)$ only depends on $x_u$ for variables $u \in I$. For $I \not \in \subsets{V}{k}$ define $p_I(x) = 0$. 

\begin{definition}
On input $V, p$ a \emph{minimum constraint satisfaction problem} (MIN-$k$CSP) is a problem of finding an assignment $x$ minimizing $\Obj(x)=\sum_{I \in \svk} p_I(x)$.
\end{definition}

Let $R_{vi}(x)$  be an assignment over the variables $V$ that agrees with $x$ for all $u\in V$ except for $v$ where it is $i$; i.e. $R_{vi}(x)_u = \left\{\begin{array}{ll} i & \textrm{if }u=v \\ x_u & \textrm{otherwise} \end{array} \right.$. We will frequently use the identity $R_{vx_v}(x) = x$. Let $b(x, v, i) = \sum_{I \in \svk : v \in I} p_I(R_{vi}(x))$ be the number of unsatisfied constraints $v$ would be in if $x_v$ were set to $i$ (divided by $\eta$).

We say the $\ell$th constraint over $I$ is \emph{fragile} if $p(I, \ell, R_{vi}(x)) + p(I,\ell, (R_{vj}(x)) \geq 1$ for all $v \in I$ and $i \neq j \in D$.

\begin{definition}
A Min-$k$CSP is \emph{fragile-dense} if $b(x, v, i) + b(x,v,j) \geq \delta \bnkmo$ for some constant $\delta>0$ and for all assignments $x$, variables $v$ and distinct values $i$ and $j$.
\end{definition}

\begin{lemma}
An instance where every variable $v \in V$ participates in at least $\delta \eta \binom{n}{k-1}$ fragile constraints for some constant $\delta>0$
 is fragile-dense (with the same $\delta$).
\end{lemma}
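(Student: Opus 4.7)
The plan is to unfold the definition of $b(x,v,i) + b(x,v,j)$ and match it term-by-term against the fragility condition. First I would write
\[
b(x,v,i) + b(x,v,j) = \frac{1}{\eta} \sum_{I \in \svk : v \in I} \sum_\ell \bigl[p(I,\ell,R_{vi}(x)) + p(I,\ell,R_{vj}(x))\bigr],
\]
just substituting the definitions of $b$ and $p_I$ (noting that $R_{vi}(x)$ and $R_{vj}(x)$ agree with $x$ outside $v$, so the inner sum is well-defined).

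Next, I would split the double sum according to whether the $\ell$th constraint over $I$ is fragile. For fragile constraints, the defining inequality $p(I,\ell,R_{vi}(x)) + p(I,\ell,R_{vj}(x)) \geq 1$ kicks in directly (since $v \in I$ and $i \neq j$), contributing at least $1$ per fragile constraint. For non-fragile constraints, the summand is nonnegative and can simply be discarded. Therefore
\[
b(x,v,i) + b(x,v,j) \;\geq\; \frac{1}{\eta} \cdot \#\{(I,\ell) : v \in I,\ \text{the $\ell$th constraint over $I$ is fragile}\}.
\]

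Finally, I would invoke the hypothesis: by assumption, the count on the right is at least $\delta \eta \binom{n}{k-1}$, so the right-hand side is at least $\delta \binom{n}{k-1}$, giving exactly the fragile-dense condition with the same constant $\delta$. There is no real obstacle here — the lemma is essentially a bookkeeping translation from a combinatorial degree condition on fragile constraints to the analytic inequality in the definition of fragile-dense — so the proof should be only a few lines. The one thing worth being careful about is simply that the fragility inequality is applied with the same pair $(i,j)$ for every fragile constraint (which is fine, as it holds for all $i \neq j \in D$ simultaneously by definition).
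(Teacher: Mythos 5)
Your proposal is correct and matches the paper's proof essentially line for line: both unfold $b(x,v,i)+b(x,v,j)$ into the double sum over sets $I \ni v$ and constraints $\ell$, lower-bound the fragile terms by $1$ via the fragility inequality (discarding the nonnegative non-fragile terms), and then apply the degree hypothesis to get $\frac{1}{\eta}\cdot\delta\eta\binom{n}{k-1}=\delta\binom{n}{k-1}$. No further comment is needed.
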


\begin{proof}
By definitions:
\begin{eqnarray*}
b(x,v,i) + b(x,v,j) &=& \sum_{I\in \svk:v\in I} \left(p_I(R_{vi}(x)) + p_I(R_{vj}(x))\right) \\
& = & \sum_{I\in \svk:v\in I} \frac{1}{\eta} \sum_{\ell} \left(p(I, \ell, R_{vi}(x)) + p(I, \ell, R_{vj}(x))\right) \\
&\geq& \sum_{I\in \svk:v\in I} \frac{1}{\eta} \cdot (\textrm{The number of fragile constraints over }I) \\
& \geq & \frac{\delta \eta}{\eta} \bnkmo = \delta \bnkmo
\end{eqnarray*}
\end{proof}

We will make no further mention of individual constraints, $\eta$ or fragility; our algorithms and analysis use $p_I$ and the fragile-dense property exclusively.

\subsection{Algorithm}

We now describe our linear-time algorithms. The
main ingredients of the algorithm are new iterative applications of additive
error algorithms and a special greedy technique for refining random samples
of constant size.

\begin{algorithm}[tb]
\caption{Our algorithm for dense-fragile MIN-$k$CSP}
\label{alg:main2}
\begin{algorithmic}[1] 
\STATE Run a $\frac{\epsilon}{1+\epsilon}\delta^2 / 72 k\bnk$ additive approximation algorithm.
\IF {$Obj(answer) \geq \bnk \delta^2 / 72 k$}
 \STATE Return $answer$.
\ELSE
 \STATE Let $s = \frac{18 \log (480 |D| k / \delta)}{\delta^2}$
 \STATE Draw $S_1, S_2,\ldots,S_s$ randomly from $\svkmo$ with replacement.
 \FOR{Each assignment $\hat{x^*}$ of the variables in $\bigcup_{j=1}^s S_j$}
  \STATE For all $v$ and $i$ let $\hat b(v,i) = \frac{\binom{n}{k-1}}{s} \sum_{j = 1}^s p_{S_j\cup \{v\}} (R_{vi}(\hat{x^*}))$
  \STATE For all $v \in V$ let $x^{(1)}_v = \arg \min_i \hat b(v,i)$
  \STATE For all $v \in V$ let $x^{(2)}_v = \arg \min_i b(x^{(1)}, v, i)$
  \STATE Let $C=\setof{v \in V}{b(x^{(1)}, v,x^{(2)}_v) < b(x^{(1)},v,j) - \delta \bnkmo / 6 \textrm{ for all }j \neq x^{(2)}_v}$.
  \STATE Find $x^{(3)}$ of cost at most $\frac{\epsilon |V \sm C| \delta}{3 n} \bnkmo + \min \left[\Obj(x)\right]$ using an additive approximation algorithm, where the minimum ranges over $x$ such that $x_v = x^{(2)}_v \,\,\forall v \in C$.
 \ENDFOR
 \STATE Return the best assignment $x^{(3)}$ found.
\ENDIF
\end{algorithmic}
\end{algorithm}

Let $s=\frac{18 \log (480 |D| k / \delta)}{\delta^2}$ and $S_1,S_2,\ldots,S_s$ be a multiset of independent random samples of $k-1$ variables from $V$.
One can estimate $b(x^*,v,i)$ using the unbiased estimator $\hat b(v,i) = \frac{\binom{n}{k-1}}{s} \sum_{j = 1}^s p_{S_j\cup \{v\}} (R_{vi}(\hat{x^*}))$ (see Lemma~\ref{lem:fewClearCorrupted} for proof). One can determine the necessary $x_v^*$ by exhaustively trying each possible combination.

\section{Analysis of Algorithm~\ref{alg:main2}}\label{sec:FDAnalysis}

We use one of the known additive error approximation algorithms for MAX-$k$CSP problems.
\begin{theorem}\label{thm:MS08} \cite{MS08}
For any MAX-$k$CSP (or MIN-$k$CSP) and any $\epsilon'>0$ there is a randomized algorithm which returns an assignment of cost at most $OPT + \epsilon' n^k$ in runtime $O(n^k) + 2^{O(1/\epsilon'^2)}$.
\end{theorem}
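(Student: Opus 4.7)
The plan is to follow the sample-and-greedy strategy developed for dense CSP additive approximation \cite{AKK95, F96, AFKK02}. The algorithm draws a random sample $S \subseteq V$ of $s = \Theta(1/\epsilon'^2)$ variables. The key idea is that if we knew the restriction $x^*|_S$ of some fixed near-optimal assignment to $S$, then for every $v \in V$ and $i \in D$ we could form a Hoeffding average of $p_{T \cup \{v\}}(R_{vi}(x^*|_S))$ over random $(k-1)$-subsets $T \subseteq S$ as an (appropriately scaled) unbiased estimator of $b(x^*, v, i)$, with $O(\epsilon' n^{k-1})$ additive error. The algorithm then forms a candidate by setting each $v \notin S$ to the value that minimizes this estimate (and $v \in S$ to the guessed value). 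Since $x^*|_S$ is unknown, it loops over all $|D|^s = 2^{O(1/\epsilon'^2)}$ possible guesses $\hat y$ for $x^*|_S$.

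For correctness, focus on the iteration where $\hat y = x^*|_S$. The transition from $x^*$ to the resulting candidate $x^{\hat y}$ can be tracked one variable at a time along a hybrid assignment: each individual swap at $v$ changes $\Obj$ by $b(z, v, x^{\hat y}_v) - b(z, v, x^*_v)$ for the current hybrid $z$, which is bounded by $O(\max_i |\hat b(v,i) - b(x^*,v,i)|)$ together with a lower-order term absorbing the gap between $z$ and $x^*$ (which only affects constraints containing two already-swapped variables). Summing over the $n$ variables and combining with the Hoeffding bound at sample size $s$ yields total excess cost $O(\epsilon' n^k)$.

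The main technical obstacle is delivering the runtime in additive form $O(n^k) + 2^{O(1/\epsilon'^2)}$, rather than the naive product $O(n^k \cdot 2^{O(1/\epsilon'^2)})$. To avoid a full $O(n^k)$ evaluation of $\Obj$ per guess, one estimates $\Obj(x^{\hat y})$ itself from a second, independent constant-size sample of $k$-subsets of $V$; this estimation touches only $\mathrm{poly}(s)$ data per guess, independent of $n$, because $x^{\hat y}$ can be evaluated pointwise on any particular $k$-subset in $O(s)$ time from $\hat y$ alone. The $2^{O(1/\epsilon'^2)}$-wide enumeration then runs purely in this sketched space, the best-estimated guess is selected, and only once the winner is fixed do we spend a single $O(n^k)$ pass to materialize and output the full assignment on all $n$ variables. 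The most delicate step I anticipate is calibrating the two sampling stages so that a union bound over the $2^{O(1/\epsilon'^2)}$ guesses still leaves the chosen candidate within $\epsilon' n^k$ of optimum with constant probability, which forces $s = \Omega(1/\epsilon'^2)$ but fortunately no more, preserving the claimed runtime.
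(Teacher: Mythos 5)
This theorem is \emph{cited} from \cite{MS08}, not proved in the paper, so there is no internal proof to compare against. Your algorithmic outline --- sample $\Theta(1/\epsilon'^2)$ variables, exhaustively guess their optimal values, extend by greedy using the induced Hoeffding estimator $\hat b$, and then avoid the naive multiplicative blow-up by selecting the best guess via a second sketch-based estimate of $\Obj$ and materializing only the winner --- does capture the high-level structure of the Mathieu--Schudy algorithm and its predecessors, and your observation about why the runtime comes out additive is the right one.

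However, the middle paragraph of your analysis is not rigorous, and the gap you flag as ``the most delicate step'' is not the real obstacle. Two concrete issues. First, with a sample of constant size $s=\Theta(1/\epsilon'^2)$, the bound $|\hat b(v,i)-b(x^*,v,i)|=O(\epsilon' n^{k-1})$ holds for each fixed $(v,i)$ only with constant probability, not uniformly over all $n|D|$ pairs; a union bound is out of the question, so the argument must be carried through in expectation (or via a careful tail integral), which you do not do. Second, and more seriously, the ``lower-order term absorbing the gap between $z$ and $x^*$'' is not lower order. It accounts for all constraints $I$ containing $v$ and at least one earlier greedy-corrupted variable, and the number of such constraints, summed over the hybrid steps, is $\Theta(m^2 n^{k-2})$ where $m$ is the number of corrupted variables. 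With a constant sample a constant fraction of variables can be corrupted (the paper's own Section~2.1 says exactly this), so $m=\Theta(n)$ and this term is $\Theta(n^{k})$ --- the trivial bound. Controlling it is the actual heart of the additive-error proof: it requires either a cancellation/concentration argument in expectation over the sample (the cross terms are signed and largely cancel), or a structurally different method such as the cut/tensor decomposition of \cite{AFKK02}. As written, your hybrid accounting asserts the conclusion rather than proving it, so the sketch does not constitute a proof of Theorem~\ref{thm:MS08}.
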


Throughout the rest of the paper let $x^*$ denote an optimal assignment.

First consider Algorithm~\ref{alg:main2} when the ``then'' branch of the ``if'' is taken. Choose constants appropriately so that the additive error algorithm fails with probability at most $1/10$ and assume it succeeds. Let $x^a$ denote the additive-error solution. We know
$\Obj(x^a) \leq \Obj(x^*) + \frac{\epsilon}{1+\epsilon}P$ and $\Obj(x^a) \geq P$ where $P=\bnk \delta^2 / 72 k$. Therefore $\Obj(x^*) \geq P (1 - \frac{\epsilon}{1+\epsilon}) = \frac{P}{1+\epsilon}$ and hence $\Obj(x^a) \leq \Obj(x^*) + \frac{\epsilon}{1+\epsilon} (1+\epsilon) \Obj(x^*) = (1+\epsilon) \Obj(x^*)$. Therefore if the additive approximation is returned it is a $1+\epsilon$-approximation.

The remainder of this section considers the case when Algorithm~\ref{alg:main2} takes the ``else'' branch.
 Define $\gamma$ so that $Obj(x^*) = \gamma \bnk$. We have $\Obj(x^*) \leq \Obj(x^a) < \bnk \delta^2 / 72 k$ so $\gamma \leq \delta^2 / 72 k$. We analyze the $\hat{x^*}$ where we guess $x^*$, that is when $\hat{x^*_v} = x^*_v$ for all $v \in \bigcup_{i=1}^s S_i$. Clearly the overall cost at most the cost of $x^{(3)}$ during the iteration when we guess correctly.

\begin{lemma}\label{lem:allZeroClear}
$b(x^*, v, x^*_v) \leq b(x^*,v,j) $ for all $j \in D$.
\end{lemma}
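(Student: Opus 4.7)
The plan is to exploit the optimality of $x^*$ together with the fact that flipping a single variable only affects constraints containing that variable. Unpacking the definitions, $b(x^*, v, x^*_v) = \sum_{I \in \svk : v \in I} p_I(R_{v x^*_v}(x^*)) = \sum_{I \ni v} p_I(x^*)$, using the identity $R_{v x^*_v}(x^*) = x^*$ noted in the model subsection. Similarly $b(x^*, v, j) = \sum_{I \ni v} p_I(R_{vj}(x^*))$.

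Next I would observe that $\Obj(x^*) - \Obj(R_{vj}(x^*)) = \sum_{I \ni v} \bigl(p_I(x^*) - p_I(R_{vj}(x^*))\bigr)$, because every term $p_I$ with $v \notin I$ depends only on variables outside $\{v\}$ and so is unchanged when $x^*_v$ is replaced by $j$. Thus the difference $b(x^*, v, x^*_v) - b(x^*, v, j)$ equals $\Obj(x^*) - \Obj(R_{vj}(x^*))$.

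Finally I would invoke optimality: since $x^*$ minimizes $\Obj$, we have $\Obj(x^*) \leq \Obj(R_{vj}(x^*))$ for every $v \in V$ and every $j \in D$. Combining with the identity above gives $b(x^*, v, x^*_v) \leq b(x^*, v, j)$, which is the desired conclusion.

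There is essentially no obstacle here; this is a one-line consequence of optimality plus locality. The only mildly delicate point is bookkeeping with the $R_{vi}$ notation, making sure the sum over $I \ni v$ really does equal the difference in the full $\Obj$ values — i.e., that constraints not containing $v$ contribute zero to the change — which follows immediately from the statement in the model subsection that $p_I(x)$ depends only on $x_u$ for $u \in I$.
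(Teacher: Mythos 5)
Your argument is correct and is exactly the paper's intended proof (the paper dismisses the lemma as ``immediate from definition of $b$ and optimality of $x^*$''); you have simply spelled out the locality step showing that $b(x^*,v,x^*_v)-b(x^*,v,j)=\Obj(x^*)-\Obj(R_{vj}(x^*))\leq 0$. No issues.
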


\begin{proof}
Immediate from definition of $b$ and optimality of $x^*$.
\end{proof}

\begin{lemma}\label{lem:objSumBs} For any assignment $x$,
\[
\Obj(x) = \frac{1}{k}\sum_{v \in V} b(x,v,x_v)
\]
\end{lemma}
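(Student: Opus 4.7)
The plan is to prove this by straightforward double-counting, exploiting the identity $R_{vx_v}(x) = x$ highlighted in the model subsection. First I would expand $b(x,v,x_v)$ using its definition. Since $R_{vx_v}(x) = x$, we get
\[
b(x,v,x_v) = \sum_{I \in \svk : v \in I} p_I(R_{vx_v}(x)) = \sum_{I \in \svk : v \in I} p_I(x).
\]

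Next I would sum this identity over all $v \in V$ and swap the order of summation:
\[
\sum_{v \in V} b(x,v,x_v) = \sum_{v \in V} \sum_{I \in \svk : v \in I} p_I(x) = \sum_{I \in \svk} p_I(x) \cdot |\{v \in V : v \in I\}|.
\]
Since every $I \in \svk$ has exactly $k$ elements, the inner count is $k$, giving $\sum_{v \in V} b(x,v,x_v) = k \sum_{I \in \svk} p_I(x) = k \cdot \Obj(x)$. Dividing by $k$ yields the claim.

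There is no real obstacle; the only point to be careful about is noticing that the lemma evaluates $b$ at $i = x_v$ (not at some arbitrary $i$), which is precisely what makes the $R$-shift collapse back to $x$ and allows the double-counting to go through. The proof is a one-liner once that observation is made.
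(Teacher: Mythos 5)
Your proof is correct and uses the same double-counting argument as the paper: you expand $b(x,v,x_v)$ via $R_{vx_v}(x)=x$, then swap the order of summation and observe that each $I$ is counted exactly $k$ times. The paper phrases the swap starting from $\Obj(x)$ by inserting the factor $\sum_{v\in I}\frac{1}{k}=1$, but the computation is identical.
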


\begin{proof}
By definitions,
$$b(x,v, x_v) = \sum_{I \in \svk : v \in I} p_I(R_{v x_v}(x)) = \sum_{I \in \svk : v \in I} p_I(x).$$
Write $\Obj(x)=\sum_{I \in \svk} p_I(x)=\sum_{I \in \svk} p_I(x)  \left[\sum_{v \in I}\frac{1}{k}\right]$ and reorder summations.
\end{proof}

\begin{definition}
We say variable $v$ in assignment $x$ is \emph{corrupted} if $x_v \neq x_v^*$.
\end{definition}

\begin{definition}
Variable $v$ is \emph{clear} if $(x^*, v,x_v^*) < b(x^*,v,j) - \frac{\delta}{3} \bnkmo$ for all $j\neq x_v^*$. A variable is \emph{unclear} if it is not clear.
\end{definition}

Clearness is the analysis analog of the algorithmic notion of clear-cut vertices sketched in Section~\ref{sec:intuition}. Comparing the definition of clearness to Lemma~\ref{lem:allZeroClear} further motivates the terminology ``clear.''

\begin{lemma}\label{lem:fewUnclear}
The number of unclear variables $t$ satisfies
\[
t \leq 3 (n-k+1) \gamma / \delta) \leq \frac{\delta n}{24 k}
.\]
\end{lemma}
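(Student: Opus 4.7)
The plan is to combine the fragile-dense lower bound on $b(x^*,v,i)+b(x^*,v,j)$ with the near-tie condition defining unclearness to produce a lower bound on $b(x^*,v,x_v^*)$ for each unclear variable, then sum over all unclear variables and relate the total to $\Obj(x^*)$ via Lemma~\ref{lem:objSumBs}.

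First, I would handle a single unclear variable. By definition of unclear, there is some $j \neq x_v^*$ with $b(x^*,v,j) - b(x^*,v,x_v^*) \leq \tfrac{\delta}{3}\bnkmo$. By the fragile-dense property applied to $x^*$ at variable $v$ with values $x_v^*$ and $j$, we have $b(x^*,v,x_v^*) + b(x^*,v,j) \geq \delta \bnkmo$. Substituting the inequality from unclearness gives $2\,b(x^*,v,x_v^*) \geq \delta \bnkmo - \tfrac{\delta}{3}\bnkmo$, i.e.\ $b(x^*,v,x_v^*) \geq \tfrac{\delta}{3}\bnkmo$.

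Next, by Lemma~\ref{lem:allZeroClear} every $b(x^*,v,x_v^*)$ is non-negative in the sense that dropping it only decreases the sum, so summing the preceding bound only over the $t$ unclear variables gives
\[
\sum_{v \text{ unclear}} b(x^*,v,x_v^*) \;\geq\; t \cdot \tfrac{\delta}{3}\bnkmo.
\]
On the other hand, Lemma~\ref{lem:objSumBs} together with the identity $k\bnk = (n-k+1)\bnkmo$ yields
\[
\sum_{v \in V} b(x^*,v,x_v^*) \;=\; k \,\Obj(x^*) \;=\; k\gamma\bnk \;=\; \gamma (n-k+1)\bnkmo,
\]
and the unclear-sum is bounded above by this total. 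Combining the two displays and cancelling $\bnkmo$ gives $t \leq 3(n-k+1)\gamma/\delta$, which is the first claimed inequality. The second follows by plugging in the bound $\gamma \leq \delta^2/(72k)$ established just before the lemma and the trivial $n-k+1 \leq n$.

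There is no real obstacle here: the only subtlety is making sure to apply fragile-density to the specific pair $(x_v^*, j)$ coming from the unclearness witness rather than to an arbitrary pair, and to correctly convert between $\bnk$ and $\bnkmo$ via the factor $(n-k+1)/k$ when passing from $\Obj(x^*)=\gamma\bnk$ to the sum $\sum_v b(x^*,v,x_v^*)$.
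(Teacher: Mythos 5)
Your proof is correct and follows the same approach as the paper's: lower-bound $b(x^*,v,x_v^*)$ for each unclear $v$ by combining the unclearness inequality with fragile-density, then sum over unclear variables, compare to $k\,\Obj(x^*) = k\gamma\bnk$ via Lemma~\ref{lem:objSumBs}, and finish with $\gamma \leq \delta^2/72k$. The only tiny slip is attributing non-negativity of $b(x^*,u,x_u^*)$ (needed to drop the clear variables from the sum) to Lemma~\ref{lem:allZeroClear}; it actually follows directly from $p_I \geq 0$, but this does not affect the argument.
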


\begin{proof}
Let $v$ be unclear and choose $j \neq x_v^*$ minimizing $b(x^*,v,j)$. By unclearness, $b(x^*,v,x_v^*) \geq b(x^*,v,j) - (1/3) \delta \bnkmo$. By fragile-dense,  $b(x^*,v,x_v^*) + b(x^*,v,j) \geq \delta \bnkmo$. Adding these inequalities we see
\begin{eqnarray}
b(x^*,v,x_v^*) &\geq& \frac{1-1/3}{2}\delta \bnkmo =  \frac{1}{3}\delta \bnkmo \label{eqn:unclearPricey}
\end{eqnarray}

By Lemma~\ref{lem:objSumBs} and (\ref{eqn:unclearPricey}),
\[
OPT = \gamma \binom{n}{k} = 1/k \sum_v b(x^*, v, x_v^*) \geq 1/k \sum_{v:unclear}\frac{\delta}{3} \bnkmo =  \frac{\delta}{3k} \bnkmo t
.\]
Therefore $t \leq  \gamma \bnk \frac{3k}{\delta\bnkmo} = \frac{3 \gamma}{\delta}(n-k+1)$.

For the second bound observe $3 n \gamma / \delta \leq \frac{3n}{\delta}\frac{\delta^2}{72k} = \frac{\delta n}{24 k}$.
\end{proof}


\begin{lemma}\label{lem:fewClearCorrupted}
The probability of a fixed clear variable $v$ being corrupted in $x^{(1)}$ is bounded above by $\frac{\delta}{240 k}$.
\end{lemma}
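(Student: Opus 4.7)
The plan is to show that $\hat b(v,i)$ concentrates around $b(x^*,v,i)$ via Hoeffding, and then combine this concentration with the clearness condition to conclude that $x^{(1)}_v = x^*_v$ with high probability.

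First I would verify that $\hat b(v,i)$ is an unbiased estimator of $b(x^*,v,i)$. Since we are analyzing the loop iteration in which $\hat{x^*}$ agrees with $x^*$ on $\bigcup_j S_j$, and since $p_{S_j\cup\{v\}}(R_{vi}(x))$ depends only on $x$ restricted to $S_j \cup \{v\}$, we may replace $\hat{x^*}$ by $x^*$ in each summand. Then, summing over a random $S \in \svkmo$, any $S$ containing $v$ contributes $0$ (since $|S \cup \{v\}| = k-1 \ne k$ forces $p_{S\cup\{v\}} = 0$ by convention), while $\{S \in \svkmo : v \notin S\}$ is in bijection with $\{I \in \svk : v \in I\}$. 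A short calculation then gives $\mathbf{E}[p_{S_j\cup\{v\}}(R_{vi}(x^*))] = b(x^*,v,i)/\binom{n}{k-1}$, from which $\mathbf{E}[\hat b(v,i)] = b(x^*,v,i)$.

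Next I would apply Hoeffding's inequality. Each $Z_j = p_{S_j\cup\{v\}}(R_{vi}(x^*)) \in [0,1]$ is independent across $j$, so
\[
\pr{\bigl|\hat b(v,i) - b(x^*,v,i)\bigr| > \tfrac{\delta}{6}\bnkmo} \leq 2\exp\!\bigl(-s\delta^2/18\bigr).
\]
Substituting $s = 18\log(480|D|k/\delta)/\delta^2$ makes the right-hand side equal to $\delta/(240|D|k)$.

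Now I would connect the concentration to corruption. If $v$ is clear, then for every $j \ne x^*_v$ we have $b(x^*,v,x^*_v) < b(x^*,v,j) - \tfrac{\delta}{3}\bnkmo$. In order for the greedy choice to produce $x^{(1)}_v = i \ne x^*_v$, we would need $\hat b(v,i) \le \hat b(v, x^*_v)$. Provided that $\hat b(v,x^*_v) \le b(x^*,v,x^*_v) + \tfrac{\delta}{6}\bnkmo$ and $\hat b(v,i) \ge b(x^*,v,i) - \tfrac{\delta}{6}\bnkmo$, the clearness gap forces $\hat b(v,i) > \hat b(v, x^*_v)$, contradicting the assumption. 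So corruption requires at least one of the at most $|D|$ concentration events to fail. A union bound over these $|D|$ events gives
\[
\pr{v \text{ corrupted in }x^{(1)}} \leq |D|\cdot\frac{\delta}{240|D|k} = \frac{\delta}{240k}.
\]

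The only non-routine step is verifying unbiasedness, since it requires checking that the degenerate case $v \in S_j$ contributes zero under the paper's convention $p_I = 0$ for $|I| \ne k$; once that is set up, the bookkeeping combining Hoeffding with the clearness gap is mechanical.
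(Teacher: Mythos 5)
Your proposal is correct and follows essentially the same route as the paper's proof: establish unbiasedness of $\hat b(v,i)$ using the convention $p_I=0$ for $|I|\neq k$ together with the bijection between $(k-1)$-subsets avoiding $v$ and $k$-subsets containing $v$, apply Hoeffding (the paper invokes Azuma--Hoeffding, which reduces to the same bound for independent samples) with $\lambda=\delta/6$ to get deviation probability $\delta/(240|D|k)$ per value, and union bound over the $|D|$ values against the $\tfrac{\delta}{3}\bnkmo$ clearness gap. The constants and the final bound match the paper exactly.
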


\begin{proof}
First we show that $\hat b(v,i)$ is in fact an unbiased estimator of $b(x^*,v,i)$ for all $i$. By definitions and particular by the assumption that $p_I=0$ when $|I|<k$, we have for any $1 \leq j \leq s$:
\begin{eqnarray*}
\E{p_{S_j \cup \{v\}}(R_{vi}(x^*))} &=& \frac{1}{\bnkmo} \sum_{J \in \subsets{V}{k-1}}  p_{J \cup \{v\}}(R_{vi}(x^*)) \\
 &=&\frac{1}{\bnkmo} \sum_{I \in \subsets{V}{k} : v \in I}  p_{I}(R_{vi}(x^*)) \\
&=&\frac{1}{\bnkmo} b_{vi}(x^*)
\end{eqnarray*}
Therefore $\E{\hat b(v,i)} = s\frac{\bnkmo}{s}\E{p_{S_1 \cup \{v\}}(R_{vi}(x^*))} = b(x^*,v,i)$.

Recall that $0 \leq p_I(x) \leq 1$ by definition of $p$, so by Azuma-Hoeffding,
\begin{eqnarray*}
\pr{\left|\sum_{j=1}^{s} p_{S_j \cup \{v\}}(R_{vi}(x^*))  -  \frac{s}{\bnkmo}b(x^*,v,i)\right| \geq \lambda s} &\leq& 2e^{-2\lambda^2 s}
\end{eqnarray*}
hence\commentw{Check the constants}
\begin{eqnarray*}
\pr{|\hat b(v,i)  -  b(x^*,v,i)| \geq \lambda\bnkmo} &\leq& 2e^{-2\lambda^2 s}
\end{eqnarray*}

Choose $\lambda=\delta/6$ and recall  $s=\frac{18 \log (480 |D| k / \delta)}{\delta^2}$, yielding.
\begin{eqnarray*}
\pr{|\hat b(v,i)  -  b(x^*,v,i)| \geq \frac{\delta}{6} \bnkmo} & \leq & \frac{\delta}{240 |D| k}
\end{eqnarray*}

By clearness we have $b(x^*,v,j) > b(x^*,v,x^*_v) + \delta \bnkmo / 3$ for all $j \neq x_v^*$. Therefore, the probability that $\hat b(v,x^*_v)$ is not the smallest $\hat b(v,j)$ is bounded by $|D|$ times the probability that a particular $\hat b(v,j)$ differs from its mean by at least $\delta \bnkmo / 6$. Therefore $\pr{x_v^1 \neq x_v^*} \leq |D|\frac{\delta}{240 |D| k} = \frac{\delta}{240 k}$.
\end{proof}

Let $E_1$ denote the event that the assignment $x^{(1)}$ has at most $\delta n / 12 k$ corrupted variables.

\begin{lemma}\label{lem:fewCorruptedOne}
Event $E_1$ occurs with probability at least $1 - 1/10$.
\end{lemma}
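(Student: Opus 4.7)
The plan is to split the set of corrupted variables into corrupted clear ones and corrupted unclear ones, bound each separately, and sum.

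First I would bound the number of corrupted unclear variables trivially by the total number of unclear variables, which by Lemma~\ref{lem:fewUnclear} is at most $\delta n / (24 k)$; this bound is deterministic and uses nothing about the sample.

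Next I would handle the corrupted clear variables using Lemma~\ref{lem:fewClearCorrupted}, which says each individual clear variable is corrupted with probability at most $\delta / (240 k)$. By linearity of expectation (summing over at most $n$ clear variables), the expected number of corrupted clear variables is at most $\delta n / (240 k)$. Markov's inequality then gives
\[
\Pr\!\left[\#\{\text{corrupted clear }v\} \geq \frac{\delta n}{24 k}\right] \leq \frac{\delta n /(240 k)}{\delta n /(24 k)} = \frac{1}{10}.
\]

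Finally, on the complementary event (probability $\geq 9/10$), the total number of corrupted variables is at most
\[
\frac{\delta n}{24 k} + \frac{\delta n}{24 k} = \frac{\delta n}{12 k},
\]
which is exactly $E_1$. No real obstacle arises — the heavy lifting was already done in Lemmas~\ref{lem:fewUnclear} and~\ref{lem:fewClearCorrupted}; the only subtle point is that we must not try to apply a concentration inequality across clear variables (their corruption events need not be independent, since they all depend on the same random sample $S_1,\ldots,S_s$), which is why Markov's inequality rather than Chernoff is the right tool here.
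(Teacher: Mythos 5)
Your proof is correct and matches the paper's argument exactly: the same split into unclear variables (bounded deterministically via Lemma~\ref{lem:fewUnclear}) and clear corrupted variables (bounded in expectation via Lemma~\ref{lem:fewClearCorrupted} and then by Markov's inequality), summing to $\delta n/(12k)$. Your closing remark about why Markov rather than a concentration bound is appropriate here is a sound observation, though the paper does not comment on it.
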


\begin{proof}
We consider the corrupted clear and unclear variables separately. By Lemma~\ref{lem:fewUnclear}, the number of unclear variables, and hence the number of corrupted unclear variables, is bounded by $\frac{\delta n}{24 k}$.

The expected number of clear corrupted variables can be bounded by $\frac{\delta n}{240 k}$ using Lemma~\ref{lem:fewClearCorrupted}, so by Markov bound the number of clear corrupted variables is less than $\frac{\delta n}{24 k}$ with probability at least 1 - 1/10.

Therefore the total number of corrupted variables is bounded by $\frac{\delta n}{24 k} + \frac{\delta n}{24 k} = \frac{\delta n}{12 k}$ with probability at least 9/10.
\end{proof}

We henceforth assume $E_1$ occurs. The remainder of the analysis is deterministic.

\begin{lemma}\label{lem:bdiff}
For assignments $y$ and $y'$ that differ in the assignment of at most $t$ variables,
for all variables $v$ and values $i$, $|b(y,v,i) - b(y',v,i)| \leq t \binom{n}{k-2}$.
\end{lemma}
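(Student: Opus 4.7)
The plan is to exploit the locality of $p_I$: each term $p_I(R_{vi}(y))$ depends only on the values of $y_u$ for $u \in I \setminus \{v\}$ (since the $v$-coordinate is overwritten with $i$). So $p_I(R_{vi}(y))$ and $p_I(R_{vi}(y'))$ can disagree only when $I \setminus \{v\}$ meets the set $T = \{u \in V : y_u \neq y'_u\}$, which by hypothesis has size at most $t$.

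First I would write
\[
b(y,v,i) - b(y',v,i) = \sum_{I \in \svk : v \in I} \bigl(p_I(R_{vi}(y)) - p_I(R_{vi}(y'))\bigr),
\]
and observe, from the argument above, that the summand vanishes unless $I \cap (T \setminus \{v\}) \neq \emptyset$. I would then take absolute values and use $0 \leq p_I \leq 1$ to bound each surviving term by $1$, so that
\[
|b(y,v,i) - b(y',v,i)| \leq \bigl|\{I \in \svk : v \in I,\; I \cap (T \setminus \{v\}) \neq \emptyset\}\bigr|.
\]

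Next I would bound that cardinality by a simple union bound over the ``blamed'' variable in $T \setminus \{v\}$: for each $u \in T \setminus \{v\}$, the number of $k$-subsets of $V$ containing both $v$ and $u$ equals $\binom{n-2}{k-2}$. Hence the count is at most $|T \setminus \{v\}| \binom{n-2}{k-2} \leq t \binom{n-2}{k-2} \leq t \binom{n}{k-2}$, giving the claimed inequality.

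There is no real obstacle here; the only subtlety is remembering that the replacement operator $R_{vi}$ neutralizes any disagreement between $y$ and $y'$ at coordinate $v$ itself, so $v$ need not be excluded from $T$ for the argument to work, but including it for free via $T \setminus \{v\}$ keeps the counting clean.
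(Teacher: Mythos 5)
Your proposal is correct and follows essentially the same route as the paper's proof: isolate the $I$'s whose intersection with the disagreement set $T\setminus\{v\}$ is nonempty, bound each surviving term by $1$ via $0\le p_I\le 1$, and count those sets. The only difference is that you spell out the union-bound count giving $t\binom{n-2}{k-2}\le t\binom{n}{k-2}$ explicitly, which the paper leaves as ``trivial.''
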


\begin{proof}
Clearly $p_I(R_{vi}(y))$ is a function only of the variables in $I$ excluding $v$, so
if $I - \{v\}$ consists of variables $u$ where $y_u=y_u'$, then $p_I(R_{vi}(y)) - p_I(R_{vi}(y'))=0$. Therefore
$b(y,v,i)-b(y',v,i)$ equals the sum, over $I \in \svk$ containing $v$ and at least one variable $u$ other than $v$ where $y_u \neq y_u'$, of $\left[ p_I(R_{vi}(y)) - p_I(R_{vi}(y'))\right]$.
For any $I$, $|p_I(R_{vi}(y)) - p_I(R_{vi}(y'))| \leq 1$, so by the triangle inequality a bound on the number of such sets suffices to bound  $|b(y,v,i) - b(y',v,i)|$. The number of such sets can trivially be bounded above by $t \binom{n}{k-2}$.
\end{proof}

\begin{lemma}\label{lem:TGood}
Let $C=\setof{v \in V}{b(x^{(1)}, v,x^{(2)}_v) < b(x^{(1)},v,j) - \delta \bnkmo / 6 \textrm{ for all }j \neq x^{(2)}_v}$ as defined in Algorithm~\ref{alg:main2}. If $E_1$ then:
\begin{itemize}
\item $x_v^{(2)} = x_v^*$ for all $v \in C$.
\item $|V \sm C| \leq \frac{3 n \gamma}{\delta}$.
\end{itemize}
\end{lemma}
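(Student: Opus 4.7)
The plan is to prove both bullets via a transfer argument: use Lemma~\ref{lem:bdiff} to relate $b(x^{(1)}, v, i)$ to $b(x^*, v, i)$, exploiting the fact that under event $E_1$ the assignments $x^{(1)}$ and $x^*$ differ on at most $t = \delta n/(12 k)$ variables. Specifically, I would first establish that for every variable $v$ and value $i$,
\[
|b(x^{(1)}, v, i) - b(x^*, v, i)| \leq \tfrac{\delta}{12} \bnkmo,
\]
which follows from Lemma~\ref{lem:bdiff} together with the identity $\bnkmt \cdot (n-k+2)/(k-1) = \bnkmo$ (the ratio $n/k \leq (n-k+2)/(k-1)$ holds once $n$ is large enough compared to $k$, which is true in our regime). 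This $\delta \bnkmo/12$ slack is precisely half the separation threshold $\delta\bnkmo/6$ used in the definition of $C$, which is what makes everything fit.

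For the first bullet, fix $v \in C$ and suppose for contradiction $x_v^{(2)} \neq x_v^*$. Applying the transfer bound on both sides gives
\[
b(x^*, v, x_v^{(2)}) \leq b(x^{(1)}, v, x_v^{(2)}) + \tfrac{\delta}{12}\bnkmo < b(x^{(1)}, v, x_v^*) - \tfrac{\delta}{6}\bnkmo + \tfrac{\delta}{12}\bnkmo \leq b(x^*, v, x_v^*),
\]
where the strict middle inequality is the defining property of $v \in C$ applied with $j = x_v^*$. This contradicts Lemma~\ref{lem:allZeroClear}, so $x_v^{(2)} = x_v^*$.

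For the second bullet, I would show the contrapositive: every \emph{clear} variable lies in $C$. If $v$ is clear then $b(x^*, v, x_v^*) < b(x^*, v, j) - \tfrac{\delta}{3}\bnkmo$ for all $j \neq x_v^*$. Transferring to $x^{(1)}$ on both sides loses at most $\delta\bnkmo/12$ each, yielding $b(x^{(1)}, v, x_v^*) < b(x^{(1)}, v, j) - \tfrac{\delta}{6}\bnkmo$ for every $j \neq x_v^*$. Thus $x_v^{(2)} = x_v^*$ is the strict minimizer and the same chain of inequalities certifies $v \in C$. Consequently $V \setminus C$ is contained in the set of unclear variables, and Lemma~\ref{lem:fewUnclear} immediately gives $|V \setminus C| \leq 3(n-k+1)\gamma/\delta \leq 3n\gamma/\delta$.

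The only real obstacle is bookkeeping of constants — choosing the sample size, $\delta/12$ slack, and $\delta/6$ threshold in $C$ so that the chain of inequalities goes through; the conceptual argument is just that the ``clear/unclear'' dichotomy for $x^*$ corresponds, up to a $\delta \bnkmo/12$ perturbation, to the ``in $C$ / not in $C$'' dichotomy used algorithmically with $x^{(1)}$.
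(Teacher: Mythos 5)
Your proposal follows the paper's proof essentially step by step: derive the $\delta\bnkmo/12$ transfer bound from $E_1$ and Lemma~\ref{lem:bdiff}, use it with the definition of $C$ to show $x^{(2)}_v = x^*_v$ for $v \in C$ by contradiction with Lemma~\ref{lem:allZeroClear}, then show every clear variable lands in $C$ (so $V \setminus C$ is contained in the unclear variables) and invoke Lemma~\ref{lem:fewUnclear}. Your write-up is in fact slightly cleaner than the paper's: you make explicit the step that $x^{(2)}_v = x^*_v$ must hold because $x^*_v$ strictly minimizes $b(x^{(1)}, v, \cdot)$, you consistently use $b(x^{(1)}, \cdot, \cdot)$ where the paper has apparent typos $b(x^{(2)}, \cdot, \cdot)$, and you state the binomial ratio $\bnkmo / \bnkmt = (n-k+2)/(k-1)$ correctly where the paper's parenthetical has an off-by-one.
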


\begin{proof}
Assume $E_1$ occurred. From the definition of corrupted, event $E_1$ and Lemma~\ref{lem:bdiff} for sufficiently large $n$ (so that $\frac{n-k+1}{k-1} \geq \frac{n}{k}$) for any $v,i$:
\begin{eqnarray}
|b(x^{(1)},v,i) - b(x^*,v,i)| &\leq& \frac{\delta n}{12 k} \bnkmt \leq \frac{\delta}{12} \bnkmo \label{eqn:bClose2}
.\end{eqnarray}

For the first, if $v \in C$ then using (\ref{eqn:bClose2})
\begin{eqnarray*}
b(x^*,v,x^{(2)}_v) &\leq& b(x^{(2)},v,x^{(2)}_v) + \frac{\delta}{12} \bnkmo <  b(x^{(2)},v,j) - \frac{\delta}{6} \bnkmo + \frac{\delta}{12} \bnkmo \\
&\leq&  b(x^*,v,j) + \left( -\frac{\delta}{6}+ 2 \frac{\delta}{12} \right) \bnkmo =  b(x^*, v, j)
.\end{eqnarray*}
So by Lemma~\ref{lem:allZeroClear}, $x^*_v = x^{(2)}_v$.

For any $u$ that is clear, using (\ref{eqn:bClose2}) again:
\begin{eqnarray*}
b(x^{(2)}, v, x^*_v) &\leq& b(x^*,v,x^*_v) + \frac{\delta}{12} \bnkmo <  b(x^*,v,j) - \frac{\delta}{3} \bnkmo + \frac{\delta}{12} \bnkmo \\
&\leq&  b(x^{(2)},v,j) + \left(-\frac{\delta}{3}+ 2 \frac{\delta}{12}\right)\bnkmo =  b(x^{(2)},v,j) -\frac{\delta}{6}\bnkmo
.\end{eqnarray*}
so by definition of $C$, $u \in C$. Therefore the conclusion follows from Lemma~\ref{lem:fewUnclear}.
\end{proof}

Now we give the details of the computation of $x^{(3)}$. Let $T= V \sm C$. We call $C$ the \emph{clear-cut} vertices and $T$ the \emph{tricky} vertices. We assume that $|T| \geq k$; if not simply consider every possible assignment to the variables in $T$. With the variables in $C$ fixed, those variables can be substituted into the $p_I$ and eliminated. To restore a uniform arity of $k$ we pad the $p_I$ of arity less than $k$ with irrelevant variables from $T$. To ensure none of the resulting $p_I$ has excessive weight we use a uniform mixture of all possibilities for the padding vertices.

If $y$ is an assignment to the variables in $T$ let $R_{Ty}(x^*) = \piecewise{y_v & \tif v \in T \\ x^*_v & \ow}$, a natural generalization of the $R_{vi}(x)$ notation. For $K \in \binom{T}{k}$ and $y \in D^{|T|}$ define
\[
q_K(y) = \sum_{j = 1}^{k} \sum_{J \in \binom{K}{j}} \sum_{L \in \binom{C}{k-j}} p_{J \cup L}(R_{Ty}(x^{(2)})) \binom{|T| - j}{k - j}^{-1}
\]
It is easy to see that $q_K(y)$ is a function only of $y_v$ for $v \in K$ and is hence a cost function analogous to $p_I$ (though not properly normalized).

\begin{lemma}\label{lem:constrainedObj}
For any $y \in D^{|T|}$ we have
\[
Obj(R_{Ty}(x^{(2)})) = \sum_{K \in \binom{T}{k}} q_K(y) + \sum_{I \in \binom{C}{k}} p_I(x^{(2)})
\]
\end{lemma}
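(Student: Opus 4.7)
The plan is to verify the identity by partitioning the set $\binom{V}{k}$ over which $Obj$ sums according to how many of the constraint's $k$ variables lie in the ``tricky'' set $T$. Every $I \in \binom{V}{k}$ decomposes uniquely as $I = J \cup L$ with $J = I \cap T \in \binom{T}{j}$ and $L = I \cap C \in \binom{C}{k-j}$ for a unique $j \in \{0,1,\ldots,k\}$. The $j=0$ case gives $I \subseteq C$, and since $R_{Ty}(x^{(2)})$ agrees with $x^{(2)}$ on all of $C$, $p_I(R_{Ty}(x^{(2)})) = p_I(x^{(2)})$ for such $I$; this accounts for the $\sum_{I \in \binom{C}{k}} p_I(x^{(2)})$ term. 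So I need to show that the remaining contribution, $\sum_{j=1}^{k}\sum_{J \in \binom{T}{j}}\sum_{L \in \binom{C}{k-j}} p_{J\cup L}(R_{Ty}(x^{(2)}))$, equals $\sum_{K \in \binom{T}{k}} q_K(y)$.

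The key combinatorial identity is: for each fixed $j$ and each $J \in \binom{T}{j}$, the number of $K \in \binom{T}{k}$ with $J \subseteq K$ equals $\binom{|T|-j}{k-j}$. Swapping the order of summation in the definition of $q_K$, I would write
\[
\sum_{K \in \binom{T}{k}} q_K(y) = \sum_{j=1}^{k} \binom{|T|-j}{k-j}^{-1} \sum_{J \in \binom{T}{j}} \sum_{L \in \binom{C}{k-j}} p_{J\cup L}(R_{Ty}(x^{(2)})) \cdot \#\{K \in \tbinom{T}{k} : J \subseteq K\},
\]
and the count $\binom{|T|-j}{k-j}$ cancels the normalizing factor $\binom{|T|-j}{k-j}^{-1}$ exactly, leaving the desired sum. (This is where the assumption $|T|\ge k$ made in the paragraph preceding the lemma statement is essential, so that the binomial coefficients are nonzero and invertible.)

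Putting the pieces together, the right-hand side becomes $\sum_{I \in \binom{V}{k}, I \not\subseteq C} p_I(R_{Ty}(x^{(2)})) + \sum_{I \in \binom{C}{k}} p_I(x^{(2)})$, which in turn equals $\sum_{I \in \binom{V}{k}} p_I(R_{Ty}(x^{(2)})) = Obj(R_{Ty}(x^{(2)}))$, using once more that $R_{Ty}(x^{(2)})$ agrees with $x^{(2)}$ on $C$ to handle the $I \subseteq C$ piece.

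There is no real obstacle here; the only thing to be careful about is the double-counting bookkeeping and the fact that $p_{J\cup L}$ depends only on the assignment restricted to $J \cup L$, so the ``padding'' variables $K \setminus J$ genuinely play no role beyond being summed over. I would state this observation explicitly just after defining $q_K$ and then execute the swap of summations above.
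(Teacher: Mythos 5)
Your proposal is correct and essentially mirrors the paper's proof: both hinge on the unique decomposition $I = J \cup L$ with $J \subseteq T$, $L \subseteq C$, and on the count $\binom{|T|-j}{k-j}$ of sets $K \in \binom{T}{k}$ containing a fixed $J \in \binom{T}{j}$, which cancels the normalizing factor. The only cosmetic difference is that you swap the order of summation, whereas the paper fixes $I$ and tallies the weight of $p_I$ on each side; these are the same combinatorial argument.
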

\begin{proof}
Let $x = R_{Ty}(x^{(2)})$.
By definition
\begin{eqnarray}
 \sum_{K \in \binom{T}{k}} q_I(y) & = &  \sum_{K \in \binom{T}{k}} \sum_{j = 1}^{k} \sum_{J \in \binom{K}{j}} \sum_{L \in \binom{C}{k-j}} p_{J \cup L}(x) \binom{|T| - j}{k - j}^{-1} \label{eqn:arityPadding}
\end{eqnarray}
Compare to
\begin{eqnarray}
 Obj(x) - \sum_{I \in \binom{C}{k}} p_I(x^{(2)}) & = &  \sum_{I \in \binom{V}{k} : I \not\subseteq C}  p_{I}(x) \label{eqn:arityPadding2}
\end{eqnarray}

Fix $I \in \binom{V}{k}$ and study the weight of $p_I(x)$ in the right-hand-sides of (\ref{eqn:arityPadding}) and (\ref{eqn:arityPadding2}).
 Note there are unique $j \geq 0$,  $J \in \binom{T}{j}$ and $L \in \binom{C}{k-j}$ such that $I = J \cup L$. If $j=0$ then $p_I(x)$ has weight 0 in (\ref{eqn:arityPadding}) and in (\ref{eqn:arityPadding2}). If $j \geq 1$ then $p_I(x)$ appears once in (\ref{eqn:arityPadding}) for each $K \in \binom{T}{k}$ such that $K \supseteq J$. There are $\binom{|T| - j}{k - j}$ of those and each has weight  $\binom{|T| - j}{k - j}^{-1}$ so $p_I(x)$ has an overall weight of 1 in (\ref{eqn:arityPadding}). Clearly $j \geq 1$ implies $I \not\subseteq C$ hence the weight of $p_I(x)$ in (\ref{eqn:arityPadding2}) is 1 as well.
\end{proof}

\begin{lemma}\label{lem:qBound}
\[
0 \leq q_K(y) \leq O\left(\left(\frac{|C|}{|T|}\right)^{k-1}\right)
\]
\end{lemma}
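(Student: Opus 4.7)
\medskip

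\noindent\textbf{Proof plan.} The lower bound is immediate: every $p_{J\cup L}(\cdot)\ge 0$ and every binomial coefficient is nonnegative, so each summand in the defining sum for $q_K(y)$ is nonnegative.

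For the upper bound, the plan is to use the uniform bound $p_{J\cup L}(x)\le 1$ and then simply count the terms layer by layer in $j$. Writing out the definition and bounding each $p_{J\cup L}$ by $1$ gives
\[
q_K(y)\ \le\ \sum_{j=1}^{k}\binom{k}{j}\binom{|C|}{k-j}\binom{|T|-j}{k-j}^{-1},
\]
since $|\binom{K}{j}|=\binom{k}{j}$ (because $|K|=k$) and $|\binom{C}{k-j}|=\binom{|C|}{k-j}$. So the task is reduced to estimating, for each $j$ with $1\le j\le k$, the ratio $\binom{|C|}{k-j}/\binom{|T|-j}{k-j}$.

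For this estimate I would use the elementary sandwich $\binom{|C|}{k-j}\le |C|^{k-j}/(k-j)!$ together with $\binom{|T|-j}{k-j}\ge (|T|-k)^{k-j}/(k-j)!$, valid provided $|T|\ge 2k$ (the relevant regime, since otherwise Algorithm~\ref{alg:main2} simply exhausts over $T$). This yields
\[
\binom{|C|}{k-j}\binom{|T|-j}{k-j}^{-1}\ \le\ \left(\frac{|C|}{|T|-k}\right)^{k-j}\ \le\ 2^{k-j}\left(\frac{|C|}{|T|}\right)^{k-j},
\]
so the $j$-th summand is $O\bigl((|C|/|T|)^{k-j}\bigr)$ with a constant depending only on $k$.

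Finally I would note that in the regime of interest we have $|C|/|T|\ge 1$ (indeed $|T|\le 3n\gamma/\delta$ by Lemma~\ref{lem:TGood} while $|C|=n-|T|$ is of order $n$), so among the $k$ terms the $j=1$ term dominates, giving
\[
q_K(y)\ \le\ \sum_{j=1}^{k}\binom{k}{j}2^{k-j}\left(\frac{|C|}{|T|}\right)^{k-j}\ =\ O\!\left(\left(\frac{|C|}{|T|}\right)^{k-1}\right),
\]
as claimed. The only mild obstacle is verifying the binomial ratio bound, which is a routine calculation; otherwise the estimate is an immediate consequence of bounding $p_{J\cup L}\le 1$ and collecting powers of $|C|/|T|$.
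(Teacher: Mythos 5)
Your proposal is correct and follows essentially the same route as the paper's proof: bound each $p_{J\cup L}$ by $1$, count the terms to get $\sum_{j=1}^k \binom{k}{j}\binom{|C|}{k-j}\binom{|T|-j}{k-j}^{-1}$, estimate each summand as $O((|C|/|T|)^{k-j})$, and observe that the $j=1$ term dominates. You merely spell out the binomial-ratio estimates and the condition $|C|/|T|\gtrsim 1$ that the paper leaves implicit.
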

\begin{proof}
Recalling that $0 \leq p_I(y) \leq 1$ and $k=O(1)$:
\[
 q_K(y) \leq \sum_{j=1}^k \binom{k}{j} \binom{|C|}{k-j} \cdot 1 \cdot \binom{|T|-j}{k - j}^{-1} = \sum_{j=1}^k \bigO{\frac{|C|^{k-j}}{|T|^{k-j}} } = \bigO{\frac{|C|^{k-1}}{|T|^{k-1}}}
\]
\end{proof}

Lemma~\ref{lem:qBound} and Theorem~\ref{thm:MS08} with an error parameter of $\epsilon'=\Theta(\epsilon)$ yields an additive error of $O(\epsilon |T|^k (|C|/|T|)^{k-1}) = O(\epsilon (|T|/|C|) n^{k})$ for the problem of minimizing $\sum_{K \in \binom{T}{k}} q_K(y)$. Using Lemma~\ref{lem:TGood} we further bound the additive error $O(\epsilon (|T|/|C|) n^{k})$ by $O(\epsilon \gamma n^k)$. By Lemma~\ref{lem:constrainedObj} this is also an additive error $O(\epsilon \gamma n^k)$ for $Obj(R_{Ty}(x^{(2)}))$.  Lemma~\ref{lem:TGood} implies that $x^* = R_{Ty}(x^{(2)})$ for some $y$, so this yields an additive error  $O(\epsilon \gamma n^k) = \epsilon OPT$ for our original problem of minimizing $Obj(x)$ over all assignments $x$.

\section{Correlation Clustering and Hierarchical Clustering Algorithm}\label{sec:cc}
\subsection{Intuition}

\begin{figure}[tb]
\begin{center}
\includegraphics[width=\textwidth]{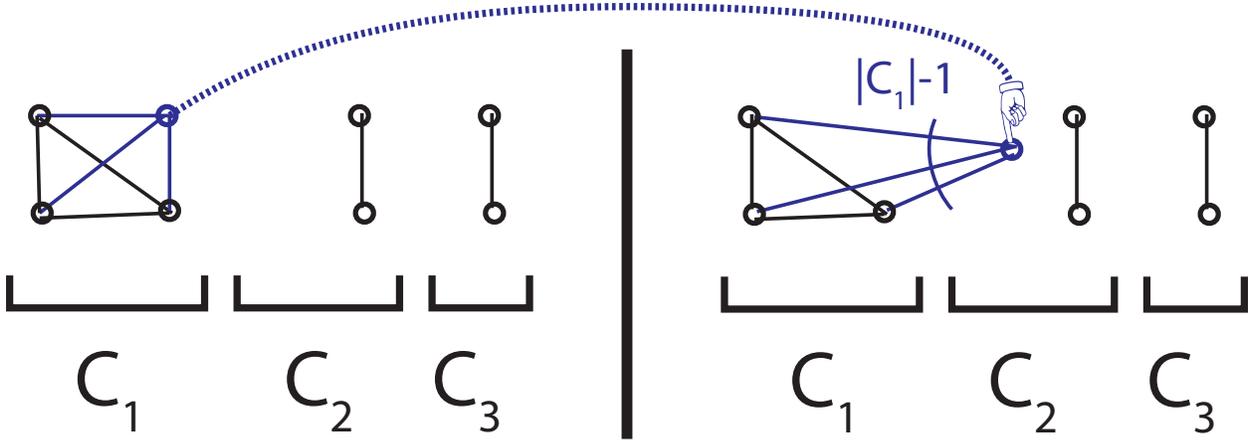}
\end{center}
\caption{An illustration of correlation clustering and the rigidity property.}
\label{fig:rigid}
\end{figure}

As we noted previously in Section~\ref{sec:intro}, correlation clustering constraints are not fragile for $d>2$. Indeed, the constraint corresponding to a pair of vertices that are not connected by an edge can be satisfied by any coloring of the endpoints as long as the endpoints are colored differently. Fortunately there is a key observation in \cite{GG06} that allows for the construction of a PTAS. Consider the cost-zero clustering shown on the left of Figure~\ref{fig:rigid}. Note that moving a vertex from a small cluster to another small one increases the cost very little, but moving a vertex from a large cluster to anywhere else increases the cost a lot. Fortunately most vertices are in big clusters so, as in \cite{GG06}, we can postpone processing the vertices in small clusters. We use the above ideas, which are due to~\cite{GG06}, the fragile-dense ideas sketched above, plus some additional ideas, to analyze our correlation clustering algorithm.

To handle hierarchical clustering (c.f. \cite{AC05}) we need a few more ideas. Firstly we abstract the arguments of the previous paragraph to a CSP property \emph{rigidity}. Secondly, we note that the number of trees with $d$ leaves is a constant and therefore we can safely try them all. We remark that all fragile-dense problems are also rigid.

\subsection{Reduction to Rigid MIN-2CSP}

We now define hierarchical clustering formally (following~\cite{AC05}). For integer $M \geq 1$, an \emph{$M$-level hierarchical clustering} of $n$ objects $V$ is a rooted tree with the elements of $V$ as the leaves and every leaf at depth (distance to root) exactly $M+1$. For $M=1$, a hierarchical clustering has one node at the root, some ``cluster'' nodes in the middle level and all of $X$ in the bottom level. The nodes in the middle level can be identified with clusters of $V$. We call the subtree induced by the internal nodes of a $M$-level hierarchical clustering the \emph{trunk}. We call the leaves of the trunk \emph{clusters}.  A hierarchical clustering is completely specified by its trunk and the parent cluster of each leaf. 

For a fixed hierarchical clustering and clusters $i$ and $j$, let $f(i, j)$ be the distance from $i$ (or $j$) to the lowest common ancestor of $i$ and $j$. For example when $M=1$, $f(i,j)=\one{i=j}$.

We are given a function $F$ from pairs of vertices to $\set{0,1,...M}$.\footnote{\cite{AC05} chose $\set{1,2,,...M+1}$ instead; the difference is merely notational.} The objective of hierarchical clustering is to output a $M$-level hierarchical clustering minimizing $\sum_{u,v} \frac{1}{M} |F(u,v) - f(parent(u),parent(v))|$. Hierarchical clustering with $d$ clusters is the same except that we restrict the number of clusters (recall that equals number of nodes whose children are leaves) to at most $d$. 
The special case of hierarchical clustering with $M=1$ is also called \emph{correlation clustering}.

\begin{lemma}\label{lem:trunk}
The number of possible trunks is at most $d^{(M-1)d}$.
\end{lemma}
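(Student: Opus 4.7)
My plan is to upper bound the number of trunks by constructing a (possibly many-to-one) encoding of trunks as strings of length $(M-1)d$ over an alphabet of size $d$.

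First I would observe that a trunk is a rooted tree of depth $M$ in which all of the (at most $d$) leaves sit at depth exactly $M$, and in which every internal node has at least one leaf descendant (any internal node lacking a leaf descendant could be removed without changing the hierarchical clustering, so we may assume trunks are pruned in this sense). Since every node at depth $j \in \{1,\ldots,M-1\}$ has at least one cluster among its descendants, and there are at most $d$ clusters, there are at most $d$ nodes at each such depth.

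Next I would describe the encoding. Label the clusters arbitrarily with distinct labels from $\{1,\ldots,d\}$ (using fewer labels if there are fewer than $d$ clusters, and treating unused labels as a distinguished ``empty'' symbol, which does not increase the alphabet size beyond $d$ if we simply reuse an existing label as a placeholder). For each depth $j \in \{1,\ldots,M-1\}$, label the nodes at depth $j$ with distinct labels from $\{1,\ldots,d\}$ in any manner. Then associate to each cluster $c \in \{1,\ldots,d\}$ and each depth $j \in \{1,\ldots,M-1\}$ the label $a_j(c) \in \{1,\ldots,d\}$ of the unique ancestor of $c$ at depth $j$. This assigns $(M-1)d$ symbols from an alphabet of size $d$, giving at most $d^{(M-1)d}$ distinct encodings.

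Finally I would argue that the encoding determines the trunk. Two clusters $c,c'$ share an ancestor at depth $j$ if and only if $a_j(c)=a_j(c')$, so the encoding determines, for every depth, the partition of clusters by common-ancestor equivalence; together with the root, these partitions determine the internal-node set and the parent-child relation of the trunk up to relabeling of internal nodes (which does not affect the trunk as an unlabeled rooted tree). Hence every trunk has at least one valid encoding, so the number of trunks is at most $d^{(M-1)d}$.

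The only mildly delicate point is the bookkeeping when the number of clusters is strictly less than $d$ or when there are fewer than $d$ nodes at some intermediate depth; this is handled by allowing the encoding map to be many-to-one, which is harmless for an upper bound.
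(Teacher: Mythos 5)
Your proof is correct and obtains the same bound, but uses a different encoding than the paper. The paper encodes a trunk by its parent function: each non-root trunk node must choose a parent from among the at most $d$ nodes on the level above, and (after observing that nodes at depth $1$ have the root as forced parent) there are $M-1$ levels whose nodes actually require a choice, with at most $d$ nodes per level, giving at most $d^{(M-1)d}$ trunks. You instead record, for each cluster $c$ and each intermediate depth $j\in\{1,\ldots,M-1\}$, the label of $c$'s unique depth-$j$ ancestor, a string of length $(M-1)d$ over an alphabet of size $d$. Both encodings give the same count; the paper's parent-based encoding makes injectivity immediate, whereas yours requires the extra (and correctly supplied) step showing that the nested common-ancestor partitions reconstruct the tree up to relabeling of internal nodes. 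A modest advantage of your write-up is that you state explicitly the pruning assumption --- every internal trunk node has a cluster among its descendants --- which is exactly what bounds the width of each level by $d$ and which the paper's terse proof uses only implicitly.
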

\begin{proof}
The trunk can be specified by giving the parent of all non-root nodes. There are at most $d$ nodes on each of the $M-1$ non-root levels so the lemma follows.
\end{proof}

We now show how to reduce hierarchical clustering with a constant number of clusters to the solution of a constant number of min-2CSPs. We use notation similar to, but not identical to, the notation used in Sections~\ref{sec:FD} and~\ref{sec:FDAnalysis}. For vertices $u,v$ and values $i,j$, let $p_{u,v}(i,j)$ be the cost of putting $u$ in cluster $i$ and $v$ in cluster $j$. This is the same concept as $p_I$ for the fragile case, but this notation is more convenient here. Define $b(x,v,i) = \sum_{u \in V, u \neq v} p_{u,v}(x_u, i)$, which is identical to $b$ of the fragile-dense analysis but expressed using different notation.

\begin{definition}
A MIN-2CSP is \emph{rigid} if for some $\delta>0$, all $v \in V$ and all $j \neq x^*_v$
\[
b(x^*,v,x_v^*) + b(x^*,v,j) \geq \delta |\setof{u \in V}{x^*_u = x^*_v}|
\]
\end{definition}

Observe that  $|\setof{u \in V}{x^*_u = x^*_v}| \leq |V|=\binom{|V|}{2-1}$ hence any fragile-dense CSP is also rigid.

\begin{lemma}\label{lem:ccCSP}
If the trunk is fixed, hierarchical clustering can be expressed as a $1/M$-rigid MIN-2CSP with $|D|=d$.
\end{lemma}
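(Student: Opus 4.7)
The plan is to formulate the induced optimization problem as a MIN-2CSP with one variable $x_v$ per object $v \in V$ and domain $D$ equal to the (at most $d$) clusters appearing as leaves of the fixed trunk, which immediately gives $|D| = d$. I will define the pairwise cost directly from the hierarchical clustering objective: for distinct objects $u,v$ and clusters $i,j$, set
\[
p_{u,v}(i,j) \;=\; \frac{1}{M}\bigl|F(u,v) - f(i,j)\bigr|.
\]
Because $F(u,v)$ and $f(i,j)$ both lie in $\{0,1,\ldots,M\}$, the cost $p_{u,v}(i,j)$ is automatically in $[0,1]$, so this is a legitimate CSP. Summing $p_{u,v}(x_u,x_v)$ over pairs recovers exactly the hierarchical clustering cost when each $u$ is placed in cluster $x_u$, so the induced CSP optimization is equivalent to the original problem (with the trunk fixed).

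The heart of the proof is verifying $1/M$-rigidity. Fix $v$, let $i = x^*_v$, and let $j \neq i$ be any other cluster. I will lower-bound $b(x^*,v,i) + b(x^*,v,j)$ by retaining only the summands corresponding to objects $u \neq v$ with $x^*_u = i$. For each such $u$ the triangle inequality yields
\[
p_{u,v}(i,i) + p_{u,v}(i,j) \;\geq\; \frac{1}{M}\bigl|f(i,i) - f(i,j)\bigr|.
\]
The key structural fact is that $f(i,i)$ attains its maximum value $M$ (the LCA of $i$ with itself is $i$, which sits at the deepest level of the trunk), whereas $f(i,j) \leq M-1$ for $j \neq i$ (the LCA of two distinct trunk-leaves lies at least one level closer to the root). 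Hence each retained summand is at least $1/M$, and summing yields
\[
b(x^*,v,i) + b(x^*,v,j) \;\geq\; \frac{1}{M}\bigl|\{u \in V : x^*_u = x^*_v\}\bigr|,
\]
absorbing the single $u = v$ term into the bound. This is exactly $1/M$-rigidity.

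The only non-routine step is unpacking the tree definition of $f$ to justify $f(i,i) - f(i,j) \geq 1$ for $i \neq j$; once this structural inequality is established, the rest of the lemma is a one-line application of the triangle inequality together with the definitional bookkeeping that $p$ is correctly normalized and that the CSP objective coincides with the hierarchical clustering cost. I therefore expect the proof to be short, with the trunk-geometry observation carrying essentially all the content.
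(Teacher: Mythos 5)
Your proof matches the paper's argument essentially line for line: the same CSP formulation $p_{u,v}(i,j) = \frac{1}{M}\bigl|f(i,j)-F(u,v)\bigr|$, the same restriction to objects $u$ with $x^*_u = x^*_v$, and the same triangle-inequality step reducing rigidity to the observation that $|f(i,i)-f(i,j)|\geq 1$ for $j\neq i$. Even the gloss over the $u=v$ term (so that the sum really gives $\tfrac{1}{M}(|\mathcal C_i|-1)$ rather than $\tfrac{1}{M}|\mathcal C_i|$) mirrors the paper, which writes an explicit ``$\approx$'' at exactly that spot and defers the fix to a footnote.
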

\begin{proof}
(C.f. Figure~\ref{fig:rigid})
Choose $\delta=1/M$. Let $D$ be the leaves of the trunk (clusters). It is easy to see that choosing
\[
p_{u,v}(i,j) = \frac{1}{M}|f(i,j) - F(u,v)|
\]
yields the correct objective function. To show rigidity, fix vertex $v$, define $i=x^*_v$ and $\calC_i = \setof{u \in V}{x_u^* = i}$. Fix $j \neq i$ and $u \in \calC_i \sm \set{v}$. Clearly $|f(i,i) - f(i,j)| \geq 1$, hence by triangle inequality $|F(u,v) - f(i,i)| + |F(u,v)-f(i,j)| \geq 1$, hence $p_{u,v}(i,i)+p_{u,v}(i,j) \geq 1/M$. Summing over $u \in \calC_i$ we see
\[
b(x^*,v,x_v^*) + b(x^*,v,j) \geq \frac{1}{M} |\calC_i \sm \set{v}| \approx \frac{1}{M} |\calC_i| = \delta |\setof{u \in V}{x^*_u = x^*_v}|
\]
Sweeping the ``$\approx$'' under the rug this proves the Lemma.\footnote{There are inelegant ways to remove this approximation. For example, assume that all $d$ clusters of $x^*$ are non-empty and consider one vertex from $\calC_j$ as well.}
\end{proof}

Lemmas~\ref{lem:ccCSP} and \ref{lem:trunk} suggest a technique for solving hierarchical clustering: guess the trunk and then solve the rigid MIN-2CSP.
We now give our algorithm for solving rigid MIN-2CSPs.

\subsection{Algorithm for Rigid MIN-2CSP}

Algorithm~\ref{alg:cc} solves rigid MIN-2CSPs by identifying clear-cut variables, fixing their value, and then recursing on the remaining ``tricky'' variables $T$. The recursion terminates when the remaining subproblem is sufficiently expensive for an additive approximation to suffice.

\begin{algorithm}[tb]
\caption{Approximation Algorithm for Rigid MIN-2CSPs.}
\label{alg:cc}
Return CC($V$, blank assignment, 0)

\bigskip

CC(tricky vertices $T$, assignment $y$ of $V \sm T$, recursion depth $depth$):

\begin{algorithmic}[1] 
  \STATE Find assignment of cost at most $\frac{\epsilon}{1+\epsilon} \cdot \frac{ \delta^3 |T|^{2}}{6 \cdot 72^2 |D|^3} + \min_{x: x_v = y_v \forall v \in V \sm T} \left[\Obj(x)\right]$ using an additive approximation algorithm.
\IF {$Obj(answer) \geq \frac{\delta^3 |T|^2}{6 \cdot 72^2 |D|^3}$ or $depth \geq |D|+1$}
 \STATE Return $answer$.
\ELSE
 \STATE Let $s = \frac{432^2 |D|^4 \log (1440 |D|^3/\delta)}{2\delta^4}$
 \STATE Draw $v_1, v_2,\ldots,v_s$ randomly from $T$ with replacement.
 \FOR{Each assignment $\hat{x^*}$ of the variables $\set{v_1, v_2, \ldots, v_s}$}
  \STATE For all $v \in T$ and $i$ let $\hat b(v,i) = \frac{|T|}{s} \sum_{j = 1}^s p_{v_j, v} (\hat{x^*_{v_j}}, i) + \sum_{u \in V \sm T} p_{u,v}(y_u, i)$
  \STATE For all $v \in V$ let $x^{(1)}_v = \piecewise{y_v & \tif v \in V \sm T \\ \arg \min_i \hat b(v,i) & \ow }$
  \STATE For all $v \in T$ let $x^{(2)}_v = \arg \min_i b(x^{(1)},v,i)$
  \STATE Let $C = \setof{v \in T}{b(x^{(1)},v,x^{(2)}_v) < b(x^{(1)},v,j) - \frac{\delta |T|}{12|D|} \textrm{ for all }j \neq x^{(2)}_v} $.
  \STATE Let $T' = T \sm C$
  \STATE Define assignment $y'$ by $y'_v=\piecewise{y_v & \tif v \in V \sm T \\ x^{(2)}_v & \tif v \in C \\ \textrm{Undefined} & \tif v \in T \sm C}$.
  \STATE If CC($T', y', depth+1$) is the best clustering so far, update best.
 \ENDFOR 
 \STATE Return the best clustering found.
\ENDIF
\end{algorithmic}
\end{algorithm}

\section{Analysis of Algorithm~\ref{alg:cc}}\label{sec:ccAnalysis}

\subsection{Runtime}

\begin{theorem}\label{thm:MSVariant}
For any $T,y$, an assignment of cost at most $\epsilon'|T|^2 + \min_{x: x_v = y_v \forall v \in V \sm T} \left[\Obj(x)\right]$ can be found in time $n^2 2^{O(1/\epsilon'^2)}$.
\end{theorem}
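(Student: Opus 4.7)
The plan is to reduce the constrained minimization over $T$ to an unconstrained MIN-2CSP on just the $|T|$ variables in $T$, and then invoke Theorem~\ref{thm:MS08} (with a tiny adaptation). Substituting $x_v = y_v$ for $v \in V \sm T$ decomposes the objective as
\[
\Obj(x) \;=\; \underbrace{\sum_{\{u,v\} \subseteq V \sm T} p_{u,v}(y_u, y_v)}_{\text{constant}} \;+\; \underbrace{\sum_{v \in T} c_v(x_v)}_{\text{unary on }T} \;+\; \underbrace{\sum_{\{u,v\} \subseteq T} p_{u,v}(x_u, x_v)}_{\text{binary on }T},
\]
where $c_v(i) := \sum_{u \in V \sm T} p_{u,v}(y_u, i)$. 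The constant term and the $|T|\cdot|D|$ unary values $c_v(i)$ can all be computed exactly in $O(n^2)$ time, and contribute no approximation error.

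What remains is the MIN-2CSP on $T$ with binary costs $p_{u,v} \in [0,1]$ and exactly-known unary costs $c_v$; here I would apply the additive-error sampler of Theorem~\ref{thm:MS08} with its greedy rule shifted by $c_v$. More concretely: draw a random sample of $s = O(1/\epsilon'^2)$ variables from $T$, and for each of the $|D|^s$ guesses of their values, set each remaining $v \in T$ to $\arg\min_i[\hat h(v,i) + c_v(i)]$, where $\hat h(v,i)$ is the usual unbiased scaled-sample estimator of the pure binary marginal $\sum_{u \in T \sm \{v\}} p_{u,v}(x^*_u, i)$. Since $c_v(i)$ is added exactly rather than estimated, it contributes no variance, so the Azuma--Hoeffding concentration argument underlying Theorem~\ref{thm:MS08} is literally unchanged and certifies that the best of the $|D|^s$ resulting assignments has binary-plus-unary cost on $T$ within additive error $\epsilon' |T|^2$ of optimum. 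Adding back the constant term gives an assignment of cost at most $\epsilon' |T|^2 + \min_{x : x_v = y_v \,\forall v \in V \sm T}\Obj(x)$.

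The only point to verify carefully, and the main (very mild) obstacle, is that shifting the MS08 greedy rule by the exact offset $c_v(\cdot)$ does not break its analysis; but the error bound there is driven purely by $|\hat h(v,i) - h^*(v,i)|$, which is unaffected by adding an assignment-independent function of $i$, so the argument goes through verbatim. The runtime is $O(n^2)$ for computing the constant and the tables $c_v$, plus $|D|^s \cdot \mathrm{poly}(|T|) = |T|^2 \cdot 2^{O(1/\epsilon'^2)} \leq n^2 \cdot 2^{O(1/\epsilon'^2)}$ for the sampling step, giving the claimed bound.
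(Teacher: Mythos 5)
Your proposal is correct and follows essentially the same route as the paper: decompose the objective into a constant, an exactly computable unary term $c_v(\cdot)$ on $T$, and a binary MIN-2CSP on $T$, then observe that the additive-error algorithm of Theorem~\ref{thm:MS08} (Algorithm~1 of Mathieu--Schudy) tolerates the exact unary shift because its error is driven solely by the misestimation of the pairwise marginals. The paper's only additional content is the remark that the faster constant-time Algorithm~2 of that reference does \emph{not} adapt (it must also estimate the objective value from a sample), which is why the runtime is $n^2 2^{O(1/\epsilon'^2)}$ rather than constant.
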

\begin{proof}
The problem is essentially a CSP on $T$ vertices but with an additional linear cost term for each vertex. It is fairly easy to see that Algorithm 1 from Mathieu and Schudy~\cite{MS08} has error proportional to the misestimation of $b$ and hence is unaffected by arbitrarily large linear cost terms. On the other hand, the more efficient Algorithm 2 from~\cite{MS08} needs to estimate the objective value from a constant-sized sample as well and hence does not seem to work for this type of problem.
\end{proof}

In this subsection $O(\cdot)$ hides only absolute constants. Algorithm~\ref{alg:cc} has recursion depth at most $|D|+1$ and branching factor $|D|^s$, so the number of recursive calls is at most $(|D|^s)^{|D|+1} = 2^{s (|D|+1) \log |D|} = 2^{\tilde O(|D|^5 / \delta^4)}$. Each call spends $O(|D| n^2)$ time on miscellaneous tasks such as computing the objective value plus time required to run the additive error algorithm, which is $n^2 2^{O(|D|^6/\epsilon^2 \delta^6)}$ by Theorem~\ref{thm:MSVariant}. Therefore the runtime of Algorithm~\ref{alg:cc} is $n^2 2^{O(\frac{|D|^6}{\epsilon^2 \delta^6})}$, where the $2^{\tilde O(|D|^5/\delta^4)}$ from the size of the recursion tree got absorbed into the $2^{O(\frac{|D|^6}{\epsilon^2 \delta^6})}$ from Theorem~\ref{thm:MSVariant}. For hierarchical clustering, $\delta=1/M$ yields a runtime of $n^2 2^{O(\frac{|D|^6 M^6}{\epsilon^2})} \cdot |D|^{(M-1)|D|} = n^2 2^{O(\frac{|D|^6 M^6}{\epsilon^2})}$.

As noted in the introduction this improves on the runtime of $n^{\bigO{\frac{9^{|D|}}{\epsilon^2}}}$ of \cite{GG06} for correlation clustering in two ways: the degree of the polynomial is independent of $\epsilon$ and $|D|$, and the dependence on $|D|$ is singly rather than doubly exponential.

\subsection{Approximation}

We fix optimal assignment $x^*$. We analyze the path through the recursion tree where we always guess $\hat x^*$ correctly, i.e. $\hat x^*_v = x^*_v$ for all $v \in \set{v_1, v_2, \ldots, v_s}$. We call this the \emph{principal path}.

We will need the following definitions.

\begin{definition}
Vertex $v$ is $m$-\emph{clear} if $b(x^*, v, x_v^*) < b(x^*,v,j) - m$ for all $j\neq x_v^*$. We say a vertex is \emph{clear} if it is $m$-clear for $m$ obvious from context. A vertex is \emph{unclear} if it is not clear.
\end{definition}

\begin{definition}
A vertex is \emph{obvious} if it is in cluster $\calC$ in OPT and it is $\delta|\calC|/3$-clear.
\end{definition}

\begin{definition}
A cluster $\calC$ of OPT is \emph{finished} w.r.t. $T$ if $T \cap \calC$ contains no obvious vertices.
\end{definition}

\begin{lemma}\label{lem:ccInduct}
With probability at least 8/10, for any $(T, y, depth)$ encountered on the principle path,
\begin{enumerate}
\item $y_v = x^*_v$ for all $v \in V \sm T$ and
\item The number of finished clusters w.r.t. $T$ is at least $depth$.
\end{enumerate}
\end{lemma}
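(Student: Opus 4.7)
The plan is induction on $depth$ along the principal path (of length at most $|D|+1$), with a union bound over sampling failures at each node. The base case $depth=0$ is immediate: with $T=V$ and $y$ blank, (1) is vacuous and (2) only requires zero finished clusters. For the inductive step we assume (1) and (2) hold at $(T,y,depth)$ and that the ELSE branch is taken (otherwise no child call exists), and show, conditional on a sampling success event of probability at least $1-2/(10(|D|+1))$ at this node, that (1) and (2) transfer to $(T',y',depth+1)$.

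Property (1) is a direct port of the Algorithm~\ref{alg:main2} analysis of Section~\ref{sec:FDAnalysis} to the subproblem on $T$. Since $V\sm T'=(V\sm T)\cup C$ and the IH handles $V\sm T$, it suffices to show $x^{(2)}_v=x^*_v$ for $v\in C$. The IH makes $\hat b(v,i)$ unbiased for $b(x^*,v,i)$, and Hoeffding with the chosen $s$ gives $|\hat b(v,i)-b(x^*,v,i)|=O(\delta|T|/|D|^2)$ uniformly in $v,i$ with the allotted failure probability. A rigidity-based analog of Lemma~\ref{lem:fewUnclear}, together with Markov as in Lemma~\ref{lem:fewCorruptedOne}, controls the number of corrupted vertices in $x^{(1)}|_T$; the bounded-difference analog of Lemma~\ref{lem:bdiff} propagates this to $|b(x^{(1)},v,i)-b(x^*,v,i)|=O(\delta|T|/|D|)$. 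The $C$-threshold $\delta|T|/(12|D|)$ is then tight enough, as in Lemma~\ref{lem:TGood}, to force $b(x^*,v,x^{(2)}_v)<b(x^*,v,j)$ for $j\ne x^{(2)}_v$, so $x^{(2)}_v=x^*_v$ by optimality of $x^*$.

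For property (2) we must produce one newly finished cluster. The algorithm's $C$-threshold $\delta|T|/(12|D|)$ and the obviousness threshold $\delta|\calC|/3$ are matched so that every obvious vertex in a cluster $\calC$ of size $|\calC|\ge|T|/(4|D|)$ is also $\delta|T|/(12|D|)$-clear modulo the $O(\delta|T|/|D|)$ error above, and hence enters $C$; thus every \emph{large} unfinished cluster becomes finished w.r.t.\ $T'$. The degenerate case in which all unfinished clusters are small is ruled out by a double-counting argument: by rigidity each non-obvious vertex of cluster $\calC$ contributes $\Omega(\delta|\calC|)$ to $\sum_v b(x^*,v,x^*_v)=2\,\Obj|_T$, so the ELSE-branch cost cutoff $\Obj|_T<\delta^3|T|^2/(6\cdot 72^2|D|^3)$ combined with the small-cluster assumption forces $T$ to contain no obvious vertices at all, meaning every cluster is already finished and the inductive count is maintained trivially (this case is consistent because the subsequent recursive call will immediately terminate via the $depth\ge|D|+1$ check).

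The main obstacle is the progress step for property (2): verifying that the three $\delta$-dependent constants (the $C$-threshold in Algorithm~\ref{alg:cc}, the $\delta|\calC|/3$ in obviousness, and the ELSE cutoff) interlock so that the ``many tiny unfinished clusters'' configuration cannot coexist with the recursion making substantive progress. The additional $|D|$ factors compared to the fragile-dense analysis are absorbed by the enlarged sample size $s$ and the additive-error parameter $\epsilon'$ in Theorem~\ref{thm:MSVariant}; the probability book-keeping is then a routine union bound across the at most $|D|+1$ principal-path nodes.
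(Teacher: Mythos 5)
Your induction skeleton, probability bookkeeping, and treatment of property (1) match the paper: the paper proves unbiasedness of $\hat b$ plus Hoeffding (Lemma~\ref{lem:ccProb}), bounds the corrupted vertices in $x^{(1)}$ (Lemma~\ref{lem:ccBErr}), and pushes this through the $C$-threshold to get $y'_v = x^*_v$ on $C$ (Lemma~\ref{lem:noMistakes}), essentially as you describe. One point you gloss over but which is load-bearing in the rigid setting: the corrupted-vertex count must be split into \emph{three} classes (clear corrupted, vertices in clusters of size below $\delta|T|/72|D|^2$, and unclear vertices in large clusters), because rigidity only lower-bounds $b(x^*,v,x^*_v)+b(x^*,v,j)$ by $\delta|\calC|$ rather than $\delta|T|$, so unclear vertices in small clusters cannot be charged to the objective and must be counted directly.

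The genuine gap is in your progress step for property (2). The paper's Lemma~\ref{lem:bigObvious} proves directly that when the ELSE branch is taken there \emph{exists an obvious vertex in a cluster of size at least $|T|/2|D|$}: at most $|T|/2$ vertices of $T$ lie in clusters smaller than $|T|/2|D|$ (there are at most $|D|$ clusters), and by Lemma~\ref{lem:fewPricey} plus the cost cutoff fewer than $|T|/2$ vertices in large clusters are non-obvious, so some vertex in a large cluster is obvious. That cluster is therefore unfinished w.r.t.\ $T$ and, by the calculation in Lemma~\ref{lem:somebodyFinishes}, all its obvious vertices enter $C$, so it is newly finished w.r.t.\ $T'$. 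Your case analysis does not establish this. In the case ``all unfinished clusters are small'' you claim the cost cutoff forces $T$ to contain no obvious vertices at all; this does not follow, since obviousness in a small cluster $\calC$ only requires $\delta|\calC|/3$-clearness, a weak condition that is fully compatible with small objective value, and your double-counting bounds the number of \emph{non-obvious} vertices in \emph{large} clusters, not the number of obvious vertices in small ones. Moreover, even granting that all clusters were already finished w.r.t.\ $T$, the invariant would \emph{not} be ``maintained trivially'': it would require at least $depth+1$ finished clusters w.r.t.\ $T'$, which is impossible once $depth$ equals the total number of clusters, and the $depth \ge |D|+1$ termination check does not rescue the child node from having to satisfy the invariant. (A secondary issue: with your large-cluster threshold $|T|/4|D|$, obviousness only gives $\delta|T|/12|D|$-clearness, and after losing $2\cdot\delta|T|/24|D|$ to the $b(x^{(1)})$-vs-$b(x^*)$ error there is no slack left to clear the $C$-threshold $\delta|T|/12|D|$; the paper's $|T|/2|D|$ is exactly what makes the constants close.)
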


Before proving Lemma~\ref{lem:ccInduct} let us see why it implies Algorithm~\ref{alg:cc} has the correct approximation factor.

\begin{proof}
Study the final call on the principal path, which returns the additive approximation clustering. The second part of Lemma~\ref{lem:ccInduct} implies that $depth \leq |D|$, hence we must have terminated because $Obj(answer) \geq \frac{\delta^3 |T|^2}{6 \cdot 72^2 |D|^3}$. By the first part of Lemma~\ref{lem:ccInduct} the additive approximation gives error at most
\[
\frac{\epsilon}{1+\epsilon} \cdot \frac{\delta^3 |T|^2}{6 \cdot 72^2 |D|^3} + OPT
.\]
so the approximation factor follows from an easy calculation.
\end{proof}

Now we prove Lemma~\ref{lem:ccInduct} by induction. Our base case is the \emph{root}, which vacuously satisfies the inductive hypothesis since $V \sm T = \set{}$ and $depth=0$. We show that if a node $(T,y, depth)$ (in the recursion tree) satisfies the invariant then its \emph{child} $(T',y', depth+1)$ does as well. We hereafter analyze a particular $(T,y, depth)$ and assume the inductive hypothesis holds for them. There is only something to prove if a child exists, so we hereafter assume the additive error answer is \emph{not} returned from this node. We now prove a number of Lemmas in this context, from which the fact that $T',y',depth+1$ satisfies the inductive hypothesis will trivially follow.

\begin{lemma}\label{lem:ccProb}
The number of $\delta^2 |T|/216D^2$-clear variables that are corrupted in $x^{(1)}$ is at most $\delta |T|/72|D|$ with probability at least $1 - 1/10|D|$.
\end{lemma}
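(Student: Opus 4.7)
\textbf{Proof plan for Lemma~\ref{lem:ccProb}.}

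The plan is to mimic the two-stage argument used for Lemma~\ref{lem:fewClearCorrupted} and Lemma~\ref{lem:fewCorruptedOne}, adapted to the rigid setting. First I would show that on the principal path $\hat b(v,i)$ is (essentially) an unbiased estimator of $b(x^*,v,i)$ for every $v \in T$. The estimator splits into two summands: the random part $\frac{|T|}{s}\sum_{j=1}^s p_{v_j,v}(\hat{x^*_{v_j}},i)$ and the deterministic part $\sum_{u\in V\sm T} p_{u,v}(y_u,i)$. By the first clause of the inductive hypothesis of Lemma~\ref{lem:ccInduct}, $y_u = x^*_u$ for $u \in V\sm T$, so the deterministic part already equals the corresponding piece of $b(x^*,v,i)$. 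On the principal path $\hat{x^*_{v_j}} = x^*_{v_j}$, and since each $v_j$ is uniform in $T$, the random part has expectation $\sum_{u \in T} p_{u,v}(x^*_u,i)$ (up to the negligible $u=v$ term, which can be folded into constants or handled by defining $p_{v,v} \equiv 0$).

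Second, I would apply Azuma--Hoeffding to the random part, which is a sum of $s$ independent terms bounded in $[0,1]$. This yields
\[
\pr{|\hat b(v,i) - b(x^*,v,i)| \geq \lambda |T|} \leq 2 e^{-2\lambda^2 s}
\]
for every $\lambda > 0$. I would set $\lambda = \delta^2/(432|D|^2)$ so that $\lambda|T| = m/2$, where $m = \delta^2|T|/(216|D|^2)$ is the clearness threshold in the statement. Plugging in the chosen sample size $s = \frac{432^2 |D|^4 \log(1440|D|^3/\delta)}{2\delta^4}$ gives $2\lambda^2 s = \log(1440|D|^3/\delta)$, and hence the above tail is at most $\delta/(720|D|^3)$.

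Third, I would argue that if $v$ is $m$-clear and every $\hat b(v,i)$ (for $i \in D$) is within $m/2$ of its mean, then $x^{(1)}_v = x^*_v$: any competitor $j \neq x^*_v$ satisfies
\[
\hat b(v, x^*_v) < b(x^*,v,x^*_v) + m/2 < b(x^*,v,j) - m/2 \leq \hat b(v,j).
\]
A union bound over the at most $|D|$ values in $D$ therefore yields $\pr{v \text{ corrupted in } x^{(1)}} \leq |D| \cdot \delta/(720|D|^3) = \delta/(720|D|^2)$ for any fixed clear $v \in T$. Summing over all clear variables in $T$, the expected number of clear corrupted variables is at most $|T|\delta/(720|D|^2)$, and Markov's inequality gives that this count exceeds $10|D|$ times its mean, i.e.\ exceeds $\delta|T|/(72|D|)$, with probability at most $1/(10|D|)$, which is precisely the claimed bound.

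The main subtlety I expect is the bookkeeping of the constants, making sure that the chosen $\lambda$, the clearness margin $m$, and the sample size $s$ fit together so that the Azuma--Hoeffding tail matches the target failure probability after the union bound over $|D|$ values and the Markov step (which costs an extra factor of $10|D|$). A minor secondary issue is handling the ``$u=v$'' diagonal term in the sample expectation; this can be absorbed into the constants or dispatched by the convention $p_{v,v} \equiv 0$, and does not affect the argument structurally.
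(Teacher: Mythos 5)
Your proposal is correct and follows essentially the same route as the paper's proof: unbiasedness of $\hat b$ via the recursion invariant and the principal-path guess, Azuma--Hoeffding with $\lambda=\delta^2/(432|D|^2)$, a union bound over the $|D|$ values, and a final Markov step costing a factor $10|D|$. Your constant bookkeeping matches the paper's exactly, and you spell out the unbiasedness decomposition in more detail than the paper (which only gestures at it), which is a welcome addition rather than a deviation.
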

\begin{proof}
Essentially the same proof as for fragile MIN-$k$CSP, and the recursion invariant, shows $\hat b(v,i)$ is an unbiased estimator of $b(x^*,v,i)$.

This time Azuma-Hoeffding yields
\begin{eqnarray*}
\pr{|\hat b(v,i)  -  b(x^*,v,i)| \geq \lambda |T|} &\leq& 2e^{-2\lambda^2 s}
\end{eqnarray*}

Choose $\lambda=\frac{\delta^2}{432 |D|^2}$ and recall  $s=\frac{432^2 |D|^4 \log (1440 |D|^3 / \delta)}{2\delta^4}$, yielding.
\begin{eqnarray*}
\pr{|\hat b(v,i) - b(x^*,v,i)| \geq \delta^2|T|/432|D|^2} & \leq & \frac{\delta}{720 |D|^3}
\end{eqnarray*}

By clearness we have $b(x^*,v,j) > b(x^*,v,x^*_v) + \delta^2|T|/216|D|^2$ for all $j \neq x_v^*$. Therefore, the probability that $\hat b(v,x_v)$ is not the smallest $\hat b(v,j)$ is bounded by $|D|$ times the probability that a particular $\hat b(v,j)$ differs from its mean by at least $\delta^2|T|/432|D|^2$. Therefore $\pr{x_v^{(1)} \neq x_v^*} \leq |D| \frac{\delta}{720 |D|^3} = \frac{\delta}{720 |D|^2}$. Therefore, by Markov bound, with probability $1-1/10|D|$ the number of corrupted $\delta^2|T|/216D^2$-clear variables is at most $\delta |T|/72|D|$.
\end{proof}

There are two types of bad events: the additive error algorithm failing and our own random samples failing. We choose constants so that each of these events has probability at most $1/10|D|$.  This path has length at most $|D|$, so the overall probability of a bad event is at most $2/10$. We hereafter assume no bad events occur.

\begin{lemma}\label{lem:fewPricey}
The number of $\delta c/3$-unclear variables in clusters of size at least $c$ is at most $\frac{6 OPT}{\delta c}$.
\end{lemma}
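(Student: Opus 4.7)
My plan is to mimic the structure of Lemma~\ref{lem:fewUnclear}, replacing the fragile-dense inequality with the rigidity inequality and replacing $\bnkmo$ with the cluster size $c$.

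Concretely, I would first fix a $\delta c/3$-unclear variable $v$ lying in a cluster of size at least $c$. By the definition of unclearness, there is some $j \neq x^*_v$ with $b(x^*, v, j) \leq b(x^*, v, x^*_v) + \delta c / 3$. By the rigidity property (Definition in Section~\ref{sec:cc}) applied to $v$ and this $j$, together with $|\{u : x^*_u = x^*_v\}| \geq c$, I get
\[
b(x^*, v, x^*_v) + b(x^*, v, j) \;\geq\; \delta c.
\]
Subtracting the two inequalities (analogous to the computation yielding (\ref{eqn:unclearPricey}) in Lemma~\ref{lem:fewUnclear}) gives
\[
b(x^*, v, x^*_v) \;\geq\; \frac{\delta c}{3}.
\]

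Next I would sum over all such $v$. The analog of Lemma~\ref{lem:objSumBs} for MIN-2CSP is $OPT = \tfrac{1}{2}\sum_{v} b(x^*, v, x^*_v)$, which follows from the same double-counting argument (each unordered pair $\{u,v\}$ contributes to both $b(x^*,u,x^*_u)$ and $b(x^*,v,x^*_v)$). Letting $t$ denote the number of $\delta c/3$-unclear variables in clusters of size at least $c$, we get
\[
OPT \;=\; \frac{1}{2}\sum_{v \in V} b(x^*, v, x^*_v) \;\geq\; \frac{1}{2}\cdot t \cdot \frac{\delta c}{3} \;=\; \frac{t \delta c}{6},
\]
and rearranging yields $t \leq 6 \cdot OPT / (\delta c)$.

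There is no substantive obstacle here; the only thing to be a little careful about is the factor of $\tfrac{1}{2}$ from the $k=2$ version of Lemma~\ref{lem:objSumBs}, which is exactly what produces the constant $6$ (rather than $3$, as in Lemma~\ref{lem:fewUnclear}'s first bound) in the conclusion.
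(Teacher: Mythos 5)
Your proof is correct and follows essentially the same route as the paper's: combine the unclearness inequality with the rigidity inequality (using cluster size at least $c$) to get $b(x^*,v,x^*_v) \geq \delta c/3$, then sum via the $k=2$ analog of Lemma~\ref{lem:objSumBs} to obtain the bound. The accounting of the factor $\tfrac{1}{2}$ matches the paper exactly.
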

Let \emph{confusing} variable refer to a $\delta c/3$-unclear variable in a cluster of size at least $c$. Let $v$ be such a variable, in cluster $\calC$ in OPT. By unclearness,
\begin{proof}
\[
b(x^*,v,x_v^*) \geq b(x^*,v,j) - \delta c/3
\]
for appropriate $j \neq x_v^*$ and by rigidity
\[
b(x^*,v,x_v^*) +  b(x^*,v,j) \geq \delta |\calC|
.\]
Adding these inequalities we see $b(x^*,v,x_v^*) \geq  \delta c/3$.

$OPT = 1/2\sum_v b(x^*,v,x_v^*) \geq 1/2\sum_{v\textrm{ confusing}} \delta c/3 = |\setof{v\in T}{v \textrm{ confusing}}| \delta c/6$
so
$|\setof{v\in T}{v \textrm{ confusing}}| \leq \frac{6 OPT}{\delta c}$.
\end{proof}

\begin{lemma}\label{lem:ccBErr} For all $v,i$,
$|b(x^{(1)},v,i) - b(x^*,v,i)| \leq \frac{\delta |T|}{24|D|}$
\end{lemma}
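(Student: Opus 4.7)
The plan is to reduce everything to counting corrupted variables. Since $p_{u,v}\in[0,1]$, I can write
\[
|b(x^{(1)},v,i)-b(x^*,v,i)| \;\le\; \sum_{u\neq v}\bigl|p_{u,v}(x^{(1)}_u,i)-p_{u,v}(x^*_u,i)\bigr|,
\]
and each term is $0$ when $x^{(1)}_u=x^*_u$ and at most $1$ otherwise. By the inductive hypothesis $y_u=x^*_u$ on $V\sm T$, and line~9 of Algorithm~\ref{alg:cc} sets $x^{(1)}_u=y_u$ on $V\sm T$, so all corrupted variables lie in $T$. It therefore suffices to show that at most $\delta|T|/(24|D|)$ vertices of $T$ are corrupted.

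I would split the corrupted vertices in $T$ into three groups and bound each by $\delta|T|/(72|D|)$. Set the clearness threshold $m=\delta^2|T|/(216|D|^2)$ used in Lemma~\ref{lem:ccProb}, and correspondingly choose $c=3m/\delta=\delta|T|/(72|D|^2)$ so that Lemma~\ref{lem:fewPricey} with this $c$ bounds exactly the $m$-unclear variables in clusters of size at least $c$. The first group is the $m$-clear corrupted vertices, which Lemma~\ref{lem:ccProb} handles directly (we are conditioning on no bad event). The second group is the $m$-unclear vertices whose OPT-cluster has size $\geq c$; Lemma~\ref{lem:fewPricey} bounds these by $6\,\mathrm{OPT}/(\delta c)$. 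The third group is all vertices whose OPT-cluster has size $<c$; since there are at most $|D|$ OPT-clusters, the total number of such vertices is at most $|D|\cdot c=\delta|T|/(72|D|)$, which trivially upper-bounds the unclear ones.

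The key ingredient for the middle bound is an upper bound on $\mathrm{OPT}$. Because we are in the else branch, the additive-error algorithm returned an assignment of cost strictly less than $\delta^3|T|^2/(6\cdot72^2|D|^3)$; since $x^*$ itself satisfies the constraint $x_v=y_v$ on $V\sm T$ (inductive hypothesis), this quantity dominates the true constrained optimum, which equals the unconstrained $\mathrm{OPT}$. Substituting $\mathrm{OPT}<\delta^3|T|^2/(6\cdot72^2|D|^3)$ and $c=\delta|T|/(72|D|^2)$ into $6\,\mathrm{OPT}/(\delta c)$ cancels down to $\delta|T|/(72|D|)$. Summing the three bounds gives $3\cdot\delta|T|/(72|D|)=\delta|T|/(24|D|)$, which is exactly what the lemma requires.

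The only slightly delicate step is the middle one, where I have to verify the translation between the algorithmic else-branch guarantee and a bound on the analytical $\mathrm{OPT}$; the other two bounds are immediate from prior lemmas and a crude counting argument respectively. The constants are designed so that the three contributions balance, which is the reason for the particular choices of $m$, $c$, and the threshold $\delta^3|T|^2/(6\cdot72^2|D|^3)$ in the algorithm.
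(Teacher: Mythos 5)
Your proof is correct and follows essentially the same approach as the paper's: bound the difference by the number of corrupted variables, then split the corrupted vertices of $T$ into the same three classes ($m$-clear with $m=\delta^2|T|/216|D|^2$, vertices in small OPT-clusters, and $m$-unclear vertices in large clusters), using Lemma~\ref{lem:ccProb}, a crude counting argument, and Lemma~\ref{lem:fewPricey} together with the else-branch bound $\mathrm{OPT}<\delta^3|T|^2/(6\cdot72^2|D|^3)$, respectively. You spell out a couple of steps the paper leaves implicit (notably the translation from the else-branch condition to a bound on unconstrained $\mathrm{OPT}$, and the $|D|\cdot c$ bound on the small-cluster vertices), but the decomposition, the thresholds, and the arithmetic are all the paper's.
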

\begin{proof}
First we show bounds on three classes of corrupted variables:
\begin{enumerate}
\item The number of $\delta^2 |T|/216 D^2$-clear corrupted vertices is bounded by $\delta |T|/72|D|$ using Lemma~\ref{lem:ccProb}
\item The number of vertices in clusters of size at most $\delta|T|/72|D|^2$ is bounded by $\delta |T|/72|D|$.
\item The number of $\delta^2 |T|/216 D^2$-unclear corrupted vertices in clusters of size at least $\delta |T|/72|D|^2$ is bounded by, using Lemma~\ref{lem:fewPricey}, $\frac{6OPT}{\delta} \frac{72 |D|^2}{\delta |T|} \leq \frac{\delta^3|T|^2}{6 \cdot 72^2 |D|^3} \cdot\frac{6 \cdot 72 |D|^2}{\delta^2|T|} = \frac{\delta |T|}{72|D|}$.
\end{enumerate}
Therefore the total number of corrupted variables in $x^{(1)}$ is at most $\frac{\delta |T|}{72|D|} + \frac{\delta |T|}{72|D|} + \frac{\delta |T|}{72|D|} = \frac{\delta |T|}{24|D|}$. The easy observation that $|b(x^{(1)},v,i) - b(x^*,v,i)|$ is bounded by the number of corrupted variables in $x^{(1)}$ proves the Lemma.
\end{proof}

\begin{lemma}\label{lem:bigObvious}
There exists an obvious vertex in $T$ that is in a cluster of size at least $|T|/2|D|$.
\end{lemma}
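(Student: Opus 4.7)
The plan is to cover $T$ by two ``bad'' sets and show their combined size is strictly less than $|T|$, forcing some vertex of $T$ to be obvious and lie in a large cluster. Set the size threshold $c = |T|/(2|D|)$. A vertex of $T$ fails the lemma's conclusion iff either (i) its OPT-cluster has size $< c$, or (ii) its OPT-cluster has size $\geq c$ but it is non-obvious. For (i), since $x^*$ partitions $V$ into at most $|D|$ clusters, the total number of vertices lying in clusters of size less than $c$ is at most $|D|\cdot c = |T|/2$, which bounds the vertices of $T$ in category (i).

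For (ii), I mimic the accounting in the proof of Lemma~\ref{lem:fewPricey}. A non-obvious vertex $v$ in cluster $\calC$ is by definition $(\delta|\calC|/3)$-unclear, so for some $j\neq x^*_v$, $b(x^*,v,x^*_v) \geq b(x^*,v,j) - \delta|\calC|/3$. Combined with rigidity, $b(x^*,v,x^*_v)+b(x^*,v,j)\geq \delta|\calC|$, this gives $b(x^*,v,x^*_v)\geq \delta|\calC|/3 \geq \delta c/3$ whenever $|\calC|\geq c$. Since in a MIN-2CSP Lemma~\ref{lem:objSumBs} yields $OPT = \tfrac{1}{2}\sum_v b(x^*,v,x^*_v)$, the number of category (ii) vertices is at most $6\,OPT/(\delta c) = 12|D|\,OPT/(\delta|T|)$.

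It remains to bound $OPT$. Because this recursive call does not return the additive-error assignment, we have $Obj(answer) < \delta^3|T|^2/(6\cdot 72^2|D|^3)$. The additive-error algorithm optimizes over assignments that agree with $y$ on $V\sm T$, and by the inductive hypothesis $y$ agrees with $x^*$ on $V\sm T$, so $x^*$ is feasible for its subproblem and $Obj(answer) \geq OPT$. Substituting this bound into the count for (ii) gives at most $2\delta^2|T|/(72^2|D|^2)$, which is far less than $|T|/2$. Summing the two categories leaves strictly fewer than $|T|$ excluded vertices, proving the lemma.

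The main step requiring care is linking the non-termination condition to a bound on the \emph{true} $OPT$: without the inductive hypothesis ensuring that $y = x^*$ on $V\sm T$, one could only conclude something about the constrained optimum rather than $Obj(x^*)$. Once this bookkeeping is in place, the proof is a short pigeonhole.
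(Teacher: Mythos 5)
Your proof is correct and follows essentially the same route as the paper: bound the vertices in clusters smaller than $|T|/2|D|$ by $|T|/2$, bound the non-obvious vertices in large clusters via Lemma~\ref{lem:fewPricey} together with $OPT < \delta^3|T|^2/(6\cdot 72^2|D|^3)$ from the non-termination condition, and conclude by pigeonhole. The only (harmless) extra step is your appeal to the inductive hypothesis to get $Obj(answer)\geq OPT$; since $answer$ is a full assignment of $V$, this inequality holds unconditionally.
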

\begin{proof}
Simple counting shows there are at most $|T|/2$ vertices of $T$ in clusters of size less than $|T|/2|D|$.

We say a vertex $v$ is \emph{confusing'} if it is non-obvious and its cluster in OPT has size at least $|T|/2|D|$. By Lemma~\ref{lem:fewPricey}
\[
|\setof{v\in T}{v \textrm{ confusing'}}| \leq \frac{12|D|}{\delta|T|} OPT \leq  \frac{12|D|}{\delta |T|} \frac{\delta^3 |T|^2}{6 \cdot 72^2 |D|^3} < |T|/2
\]

Therefore by counting there must be an obvious vertex in a big cluster of OPT.
\end{proof}

\begin{lemma}\label{lem:somebodyFinishes}
The number of finished clusters w.r.t. $T'$ strictly exceeds the number of finished clusters w.r.t. $T$.
\end{lemma}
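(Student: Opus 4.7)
The plan is to show that (a) finishedness is monotone in the tricky set, so no previously finished cluster can become unfinished when we pass from $T$ to $T' \subseteq T$, and (b) at least one new cluster becomes finished, giving the strict inequality. For (a), if $\calC \cap T$ contains no obvious vertex then $\calC \cap T' \subseteq \calC \cap T$ contains none either, so finished clusters w.r.t.\ $T$ remain finished w.r.t.\ $T'$.

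For (b), I would invoke Lemma~\ref{lem:bigObvious} to obtain an obvious vertex $v^* \in T$ sitting in some OPT cluster $\calC^*$ with $|\calC^*| \geq |T|/2|D|$. The cluster $\calC^*$ is therefore \emph{not} finished w.r.t.\ $T$. The goal is to prove $\calC^*$ is finished w.r.t.\ $T'$, i.e.\ that every obvious vertex $u \in T \cap \calC^*$ lies in $C$ and is thus removed.

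The key computation goes as follows. By the definition of obvious applied to $u \in \calC^*$, vertex $u$ is $\delta |\calC^*|/3$-clear, and since $|\calC^*| \geq |T|/2|D|$ this means
\[
b(x^*, u, x_u^*) < b(x^*, u, j) - \frac{\delta |T|}{6|D|} \quad \text{for all } j \neq x_u^*.
\]
Now apply Lemma~\ref{lem:ccBErr} twice (once to each side) to transfer this inequality from $x^*$ to $x^{(1)}$, losing at most $\delta|T|/24|D|$ on each side:
\[
b(x^{(1)}, u, x_u^*) < b(x^{(1)}, u, j) - \frac{\delta|T|}{6|D|} + 2 \cdot \frac{\delta|T|}{24|D|} = b(x^{(1)}, u, j) - \frac{\delta |T|}{12 |D|}.
\]
This forces $x^{(2)}_u = \arg\min_i b(x^{(1)}, u, i) = x_u^*$ and shows that $u$ satisfies the defining inequality of $C$ with the slack threshold $\delta|T|/12|D|$ used by the algorithm. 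Hence $u \in C$, i.e.\ $u \notin T'$.

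Applying this to every obvious vertex in $T \cap \calC^*$ yields $T' \cap \calC^* $ free of obvious vertices, so $\calC^*$ is finished w.r.t.\ $T'$ but was not finished w.r.t.\ $T$. Combined with (a), the count of finished clusters strictly increases. The only delicate step is the arithmetic matching the constant $\delta|T|/12|D|$ in the definition of $C$ with the clearness margin $\delta|\calC^*|/3 \geq \delta|T|/6|D|$ after paying the $2 \cdot \delta|T|/24|D|$ cost of Lemma~\ref{lem:ccBErr}; everything else is bookkeeping.
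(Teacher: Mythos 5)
Your proposal is correct and follows essentially the same route as the paper: invoke Lemma~\ref{lem:bigObvious} to find an unfinished large cluster, use the $\delta|\calC^*|/3 \geq \delta|T|/6|D|$ clearness margin together with two applications of Lemma~\ref{lem:ccBErr} to place every obvious vertex of that cluster into $C$, and conclude by monotonicity of finishedness under $T' \subseteq T$. The constants match the paper's computation exactly, and your explicit note that the inequality also pins down $x^{(2)}_u = x_u^*$ is a detail the paper leaves implicit.
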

\begin{proof}
Let $v$ be the vertex promised by Lemma~\ref{lem:bigObvious} and $\calC_i$ its cluster in OPT. For any obvious vertex $u$ in $\calC_i$ note that $u$ is $\delta |\calC_i|/3 \geq \delta |T|/6|D|$-clear, so Lemma~\ref{lem:ccBErr} implies
\begin{eqnarray*}
b(x^{(1)},u,i) &\leq& b(x^*,u,i) +  \frac{\delta |T|}{24|D|} < b(x^*,u,j) +  \frac{\delta |T|}{24|D|} - \frac{\delta |T|}{6|D|} \\
&\leq&  b(x^{(1)},u,j) +  2\frac{\delta |T|}{24|D|} - \frac{\delta |T|}{6|D|} =  b(x^{(1)},u,j) - \frac{\delta |T|}{12|D|}
\end{eqnarray*}
hence $u \in C$. Therefore, no obvious vertices in $\calC_i$ are in $T'$ so $\calC_i$ is finished w.r.t.
 $T'$. The existence of $v$ implies $\calC_i$ is not finished w.r.t. $T$, so $\calC_i$ is newly finished.
To complete the proof note that $T' \subseteq T$ so finished is a monotonic property.
\end{proof}

\begin{lemma}\label{lem:noMistakes}
$(T',y')$ satisfy the invariant $v \in V \sm T' \to y'_v = x^*_v$.
\end{lemma}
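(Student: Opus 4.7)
The plan is to split the set $V \sm T'$ into two pieces, namely $V \sm T$ and $C$, and verify the invariant on each piece separately.

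First, for $v \in V \sm T$, the definition of $y'$ gives $y'_v = y_v$, and the inductive hypothesis (part~1 of Lemma~\ref{lem:ccInduct} applied to the parent node $(T,y,depth)$) says $y_v = x^*_v$; so there is nothing to do in this case. The real content is the case $v \in C$, where by construction $y'_v = x^{(2)}_v$, so we must show $x^{(2)}_v = x^*_v$.

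Fix $v \in C$. From the definition of $C$ in Algorithm~\ref{alg:cc}, for every $j \neq x^{(2)}_v$ we have
\[
b(x^{(1)},v,x^{(2)}_v) < b(x^{(1)},v,j) - \tfrac{\delta |T|}{12|D|}.
\]
I would then invoke Lemma~\ref{lem:ccBErr} twice, once to upper bound $b(x^*,v,x^{(2)}_v)$ by $b(x^{(1)},v,x^{(2)}_v) + \tfrac{\delta|T|}{24|D|}$ and once to upper bound $b(x^{(1)},v,j)$ by $b(x^*,v,j) + \tfrac{\delta|T|}{24|D|}$. Adding the three quantities $\tfrac{\delta |T|}{24|D|} - \tfrac{\delta|T|}{12|D|} + \tfrac{\delta|T|}{24|D|}=0$ cancels and yields
\[
b(x^*,v,x^{(2)}_v) < b(x^*,v,j) \qquad \text{for all } j \neq x^{(2)}_v,
\]
so $x^{(2)}_v$ is the \emph{unique} minimizer of $b(x^*,v,\cdot)$.

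Finally, by the optimality of $x^*$ (Lemma~\ref{lem:allZeroClear}, or equivalently the elementary observation that flipping only $v$ cannot improve on OPT), $x^*_v$ is \emph{some} minimizer of $b(x^*,v,\cdot)$. Uniqueness of the minimizer then forces $x^{(2)}_v = x^*_v$, as desired. I do not anticipate a real obstacle here; the lemma is essentially a bookkeeping step that re-uses the $b$-error bound from Lemma~\ref{lem:ccBErr} and exactly mirrors the fragile-dense argument for the first bullet of Lemma~\ref{lem:TGood}.
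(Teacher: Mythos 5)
Your proposal is correct and follows essentially the same route as the paper's proof: split $V \sm T'$ into $V \sm T$ (handled by the inductive invariant) and $C$ (handled by chaining the definition of $C$ with two applications of Lemma~\ref{lem:ccBErr} so that the $\tfrac{\delta|T|}{24|D|}$ error terms cancel against the $\tfrac{\delta|T|}{12|D|}$ gap, then invoking optimality of $x^*$). No gaps.
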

\begin{proof}
Fix $v \in V \sm T'$. If $v \in T$ the conclusion follows from the invariant for $(T,y)$.
If $v \in T \sm T'=C$ we need to show $y'_v = x_v^*$.

Let $i=y'_v$. For any $j \neq i$, use Lemma~\ref{lem:ccBErr} to obtain
\[
b(x^*,v,i) \leq b(x^{(1)},v,i) +  \frac{\delta |T|}{24|D|} <    b(x^{(1)},v,j) +  \frac{\delta |T|}{24|D|} - \frac{\delta |T|}{12|D|} \leq  b(x^*,v,j) +  2\frac{\delta |T|}{24|D|} - \frac{\delta |T|}{12|D|}
=b(x^*,v,j)
\]
so by optimality of $x^*$ we have the Lemma.
\end{proof}

Lemmas~\ref{lem:somebodyFinishes} and \ref{lem:noMistakes} complete the inductive proof of Lemma~\ref{lem:ccInduct}.

\section*{Acknowledgements}

We would like to thank Claire Mathieu and Joel Spencer for raising a question on
approximability status of the Gale-Berlekamp game and Alex Samorodintsky for interesting discussions.

\bibliographystyle{abbrv}
\bibliography{csp}

\begin{thebibliography}{10}

\bibitem{AC05}
N.~Ailon and M.~Charikar.
\newblock Fitting tree metrics: Hierarchical clustering and phylogeny.
\newblock In {\em Procs. 46th IEEE FOCS}, pages 73--82, 2005.

\bibitem{AFKK02}
N.~Alon, W.~{Fernandez de la Vega}, R.~Kannan, and M.~Karpinski.
\newblock Random {S}ampling and {A}pproximation of {MAX-CSP} {P}roblems.
\newblock In {\em 34th ACM STOC}, pages 232--239, 2002.
\newblock journal version in J. Comput. System Sciences 67 (2003), pp. 212-243.

\bibitem{APY08}
N.~Alon, R.~Panigrahy, and S.~Yekhanin.
\newblock Deterministic {A}pproximation {A}lgorithms for the {N}earest
  {C}odeword {P}roblem.
\newblock Technical report, Elec. Coll. on Comp. Compl., ECCC TR08-065, 2008.

\bibitem{ABSS93}
S.~Arora, L.~Babai, J.~Stern, and Z.~Sweedyk.
\newblock The {H}ardness of {A}pproximate {O}ptima in {L}attices, {C}odes, and
  {S}ystems of {L}inear {E}quations.
\newblock In {\em Foundations of Computer Science}, pages 724--733, Nov 1993.

\bibitem{AFK96}
S.~Arora, A.~Frieze, and H.~Kaplan.
\newblock A {N}ew {R}ounding {P}rocedure for the {A}ssignment {P}roblem with
  {A}pplications to {D}ense {G}raph {A}rrangement {P}roblems.
\newblock In {\em Foundations of Computer Science}, pages 21--30, Oct 1996.

\bibitem{AKK95}
S.~Arora, D.~Karger, and M.~Karpinski.
\newblock Polynomial {T}ime {A}pproximation {S}chemes for {D}ense {I}nstances
  of {NP}-{H}ard {P}roblems.
\newblock In {\em 27th ACM STOC}, pages 284--293, 1995.
\newblock journal version in J. Comput. System Sciences 58 (1999), pp. 193-210.

\bibitem{BFK03}
C.~Bazgan, W.~{Fernandez de la Vega}, and M.~Karpinski.
\newblock Polynomial {T}ime {A}pproximation {S}chemes for {D}ense {I}nstances
  of the {M}inimum {C}onstraint {S}atisfaction {P}roblem.
\newblock {\em Random Structures and Algorithms}, 23(1):73--91, 2003.

\bibitem{BK02a}
P.~Berman and M.~Karpinski.
\newblock Approximating {M}inimum {U}nsatisfiability of {L}inear {E}quations.
\newblock In {\em Procs. 13th ACM-SIAM SODA}, pages 514--516, 2002.

\bibitem{BK02b}
P.~Berman and M.~Karpinski.
\newblock Approximation {H}ardness of {B}ounded {D}egree {MIN}-{CSP} and
  {MIN}-{BISECTION}.
\newblock In {\em Procs. 29th ICALP, LNCS 2380}, pages 623--632. Springer,
  2002.

\bibitem{CS04}
J.~Carlson and D.~Stolarski.
\newblock The {C}orrect {S}olution to {B}erlekamp's {S}witching {G}ame.
\newblock {\em Discrete Mathematics}, 287(1--3):145--150, 2004.

\bibitem{DKRS03}
I.~Dinur, G.~Kindler, R.~Raz, and S.~Safra.
\newblock Approximating {CVP} to {W}ithin {A}lmost-{P}olynomial {F}actors is
  {NP}-{H}ard.
\newblock {\em Combinatorica}, 23(2):205--243, 2003.

\bibitem{FL92}
U.~Feige and L.~Lovasz.
\newblock Two prover one round proof systems: Their power and their problems.
\newblock In {\em Procs. 24th STOC}, pages 733--741, 1992.

\bibitem{F96}
W.~{Fernandez de la Vega}.
\newblock {MAX-CUT} has a {R}andomized {A}pproximation {S}cheme in {D}ense
  {G}raphs.
\newblock {\em Random Struct. Algorithms}, 8(3):187--198, 1996.

\bibitem{FKK06}
W.~{Fernandez de la Vega}, R.~Kannan, and M.~Karpinski.
\newblock Approximation of {G}lobal {MAX--CSP} {P}roblems.
\newblock Technical Report TR06-124, Electronic Colloquim on Computation
  Complexity, 2006.

\bibitem{FS89}
P.~C. Fishburn and N.~J. Sloane.
\newblock The {S}olution to {B}erlekamp's {S}witching {G}ame.
\newblock {\em Discrete Math.}, 74(3):263--290, 1989.

\bibitem{FK99}
A.~M. Frieze and R.~Kannan.
\newblock Quick {A}pproximation to {M}atrices and {A}pplications.
\newblock {\em Combinatorica}, 19(2):175--220, 1999.

\bibitem{GG06}
I.~Giotis and V.~Guruswami.
\newblock Correlation clustering with a fixed number of clusters.
\newblock {\em Theory of Computing}, 2(1):249--266, 2006.

\bibitem{GT06}
A.~Gupta and K.~Talvar.
\newblock Approximating unique games.
\newblock In {\em Procs. 17th ACM-SIAM SODA}, pages 99--106, 2006.

\bibitem{Lok95}
S.~V. Lokam.
\newblock Spectral {M}ethods for {M}atrix {R}igidity with {A}pplications to
  {S}ize-{D}epth {T}radeoffs and {C}ommunication {C}omplexity.
\newblock In {\em 36th IEEE FOCS}, pages 6--15, 1995.

\bibitem{MS08}
C.~Mathieu and W.~Schudy.
\newblock Yet {A}nother {A}lgorithm for {D}ense {M}ax {C}ut: {G}o {G}reedy.
\newblock In {\em Procs. 19th ACM-SIAM SODA}, pages 176--182, 2008.

\bibitem{RV08}
R.~Roth and K.~Viswanathan.
\newblock On the {H}ardness of {D}ecoding the {G}ale-{B}erlekamp {C}ode.
\newblock {\em IEEE Transactions on Information Theory}, 54(3):1050--1060,
  March 2008.

\bibitem{RV07}
M.~Rudelson and R.~Vershynin.
\newblock Sampling from large matrices: An approach through geometric
  functional analysis.
\newblock {\em J. ACM}, 54(4):21, 2007.

\bibitem{Spe94}
J.~Spencer.
\newblock {\em Ten {L}ectures on the {P}robabilistic {M}ethod}.
\newblock SIAM, second edition, 1994.

\end{thebibliography}



\end{document}